\newcommand{\hex}{\ensuremath{C_6}\xspace}
\newcommand{\dhex}{\ensuremath{C_{10}}\xspace}
\newcommand{\Px}{\ensuremath{P_X}\xspace}
\newcommand{\Pu}{\ensuremath{P_U}\xspace}
\newcommand{\Pl}{\ensuremath{P_L}\xspace}
\newcommand{\Pz}{\ensuremath{P_Z}\xspace}
\newcommand{\Pstrich}{\ensuremath{P_{-}}\xspace}
\newcommand{\Oct}{\ensuremath{\mathcal{O}}\xspace}
\newcommand{\C}{\ensuremath{\mathcal{C}}\xspace}
\newcommand{\nC}{\ensuremath{\mathcal{\overline{C}}}\xspace}
\newcommand{\Cfree}{\C-free\xspace}
\newcommand{\Pfree}{\ensuremath{\mathcal P}-free\xspace}
\newcommand{\Tcontains}{$\triangle$-contains\xspace}
\newcommand{\Tcontain}{$\triangle$-contain\xspace}
\newcommand{\cover}{covering number\xspace}%coverage/overlay/covering number
\newcommand{\scaleL}{.58}
\newtheorem{theorem}{Theorem}
 \newtheorem{obs}[theorem]{Observation}
\newtheorem{lem}[theorem]{Lemma}
\newcounter{claimcounter}
\numberwithin{claimcounter}{theorem}
\newtheorem{claim}[claimcounter]{Claim}
\crefname{theorem}{Theorem}{Theorems}
\crefname{obs}{Observation}{Observations}
\crefname{lemma}{Lemma}{Lemmas}
\crefname{lem}{Lemma}{Lemmas}
\crefname{proposition}{Proposition}{Proposition s}
\crefname{section}{Section}{Sections}
\crefname{figure}{Figure}{Figures}
\crefname{claim}{Claim}{Claims}
\crefname{table}{Table}{Tables}
\crefname{corollary}{Corollary}{Corollaries}
\crefname{enumi}{}{}
\newcommand{\new}[1]{#1}
\begin{document}
	% --------------------------------------------------------------------
	\title{Folding Polyiamonds into Octahedra}
	% --------------------------------------------------------------------
	\author{
		\parbox{6.7cm}{\center
			{\textsc{Eva Stehr}}\\[3pt]
			\small
			\url{eva@stehr.dev}\\
			{TU Braunschweig}}
		\and
		\hspace{-2cm}
		\parbox{6.7cm}{\center
			{\textsc{Linda Kleist}}\\[3pt]
			\small
			\url{kleist@ibr.cs.tu-bs.de}\\
			{TU Braunschweig}}
	}

%	%% --------------------------------------------------------------------
	\date{July 2022}
	
	\maketitle

\begin{abstract}
We study polyiamonds (polygons arising from the triangular grid) that fold into the smallest yet unstudied platonic solid -- the octahedron.
We show a number of results. Firstly, we characterize foldable polyiamonds containing a hole of positive area, namely each but one polyiamond is foldable. Secondly, we show that a convex polyiamond folds into the octahedron if and only if it contains one of five polyiamonds.
We thirdly present a sharp size bound:
While there exist unfoldable polyiamonds of size 14, every polyiamond of size at least 15 folds into the octahedron. \new{This clearly implies that one can test in polynomial time whether a given polyiamond folds into the octahedron.} Lastly, we show that for any assignment of positive integers to the faces, there exist a  polyiamond that folds into the octahedron such that the number of triangles covering a face is equal to the assigned number.
\end{abstract}

\section{Introduction}
Algorithmic origami is a comparatively young branch of computer science that studies the algorithmic aspects of folding various materials. The construction of three-dimensional objects from two-dimensional raw materials is of particular interest and has applications in robotics in general \cite{origami-stent-grafts,sheet-metal-bending},  and also in the construction of objects in space~\cite{programmable}.

\begin{figure}[htb]
	\centering
	\begin{subfigure}[t]{.3\textwidth}
		\centering
		\includegraphics[page=5,scale=.6]{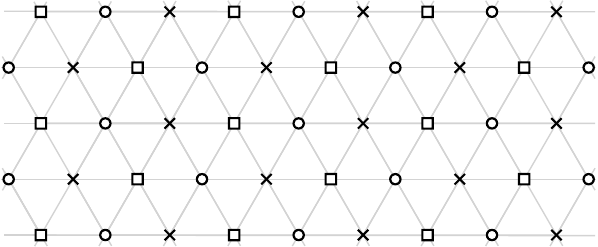}
		\caption{A polyiamond $P$.}
		\label{fig:polyiamond}
	\end{subfigure}\hfil
	\begin{subfigure}[t]{.3\textwidth}
		\centering
		\includegraphics{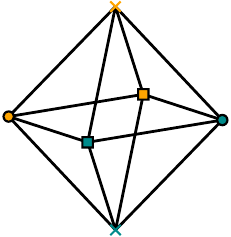}
		\caption{The octahedron \Oct.}
		\label{fig:octa}
	\end{subfigure}\hfil
	\caption{Does the polyiamond $P$ fold into the octahedron \Oct?}
	\label{fig:ex1}
\end{figure}

While foldings of 
polycubes and tetrahedra have already been studied, we take the next step and focus on the question of whether a given polyiamond folds into the octahedron, e.g., does the polyiamond in \cref{fig:ex1} fold into the octahedron?

\paragraph{Terminology}\label{sec:prelim}
By the \emph{octahedron}~\Oct, we refer to the regular octahedron composed of eight equilateral (unit) triangles; for an illustration consider \cref{fig:octa}. Note that four triangles meet in each of the six corners of the octahedron.
Because all faces of the octahedron are triangles, our pieces of paper are polygons arising from the triangular grid. 
A \emph{polyiamond} of size $n$ is a connected polygon in the plane formed by joining $n$ triangles from the triangular grid by identifying some of their common sides; for an example consider \cref{fig:polyiamond}.
To avoid confusion with the corners of the octahedron, we refer to the vertices of the triangles forming $P$ as the \emph{vertices} of $P$; note that these vertices may also lie inside~$P$.

We view $P$ as a set which includes the $n$ open triangles and a subset of the unit-length boundary edges \new{shared by any two of these triangles}; the existence of such an edge models the fact that the \new{two incident} triangles are glued along this side. 
Because we only want robust connections between triangles via their sides, we do not specify the existence or non-existence of vertices which do not influence the foldability.
However, for the upcoming definitions of slits and holes, we assume that the vertices do not belong to the polyiamond. 

If a shared edge does not belong to $P$, we call it a \emph{slit edge}.
We also allow the polyiamonds to have holes; 
a 
\emph{hole} of a polyiamond
  is a bounded connected component of its complement,
  which is different from a single vertex.  
We call a hole a \emph{slit} if it has area zero and consists of one or more slit edges. 
We consider two polyiamonds to be  \emph{the same} if they are congruent, i.e., if they can be transformed into one another by a set of translations, rotations and reflections. Moreover, a polyiamond is \emph{convex} if it forms a convex set in the plane (after adding a finite set of points corresponding to vertices).

\paragraph{Folding model}

We consider foldings in the \emph{grid folding model}, where
 folds along the grid lines are allowed such that in the final state every triangle covers a face of the octahedron, i.e., we forbid folding material strictly outside or inside the octahedron. Consequently, in the final state the folding angles are $\pm\beta:=\arccos(\nicefrac{1}{3}) $ or $\pm 180^\circ$. 
Moreover, a folding of a polyiamond $P$  into the octahedron \Oct  induces a 
\emph{triangle-face-map}, i.e., a mapping of the triangles of~$P$ to the faces of \Oct.
We say  \emph{$P$ folds into \Oct} (or \emph{$P$ is foldable}), if $P$ can be transformed by folds along the grid lines into a folded state such that the induced triangle-face-map is surjective, i.e.,  each face of \Oct is covered by at least one triangle. 
 In order to study non-foldable polyiamonds, we also consider partial foldings, i.e., foldings where potentially not all faces of \Oct are covered. Note that partial foldings induce triangle-face-maps that are not necessarily surjective.

\subsection{Related work} 
Past research has particularly focused on
 folding polyominoes into \emph{polycubes}. Allowing for folds along the \emph{box-pleat grid} (consisting of square grid lines and alternating diagonals), 
 Benbernou et  al.\  (with differing co-authors) show that every polycube~$Q$  of size $n$
 can be folded from a sufficiently large 
 square polyomino~\cite{universalHingePatternsOrthoShapes} or from a $2n\times 1$ strip-like polyomino~\cite{universalHingePatternsGridPolyhedron}. 
 Moreover, common unfoldings of polycubes have been investigated in the grid model. 
 The \emph{(square) grid model} allows folds along the grid lines of a polyomino with fold angles of $\pm90^\circ$ and $\pm 180^\circ$, and allows material only on the faces of the polyhedron.
 Benbernou et  al.\ show that there exist polyominoes that fold into all polycubes with bounded surface area~\cite{universalHingePatternsGridPolyhedron} and Aloupis et  al.\  study 
 common unfoldings of various classes of polycubes~\cite{unfoldings}.
 Moreover, there exist polyominoes that fold into several different boxes~\cite{rec0,rec1,rec2,rec3,rec4}.

Decision questions for folding (unit) cubes are studied by Aichholzer et  al.~\cite{foldingCube_withHoles_Journal,foldingCubes}. The \emph{half-grid model} allows folds of all degrees along the grid lines, the diagonals, as well as along the horizontal and vertical halving lines of the squares. In this model, every polyomino of size at least 10 folds into the cube~\cite{foldingCubes}. 
The remaining polyominoes of smaller size are explored by Czajkowski et  al.~\cite{2020smallFoldingCube}.
In the grid model, Aichholzer et al.~\cite{foldingCubes} characterized the foldable tree-shaped polyominoes that fit within a \(3 \times n\) strip. Investigating polyominoes with holes, 
Aichholzer et al. \cite{foldingCube_withHoles_Journal}  show that all but five \emph{basic} holes (a single unit square, a slit of length~1, a straight slit of length~2, a corner slit of length 2 and a U-shaped slit of length~3) guarantee that the polyomino folds in the grid model into the cube.

In the context of polyiamonds, Aichholzer et al.~\cite{foldingCubes} present a nice and simple characterization of polyiamonds that fold into the smallest platonic solid: Even when restricting to folds along the grid lines, a polyiamond folds into the tetrahedron if and only if it contains one of the two tetrahedral nets.

\paragraph{Results and organization} 
In this work, we study foldings to the smallest yet unstudied platonic solid -- the octahedron.
Our main results are as follows:
\begin{compactitem}
	\item 
	In \cref{sec:2},  we identify some sufficient and necessary conditions for foldability  and take a closer look at polyiamonds with slits and holes. 
	\item 
	Among our findings in \cref{sec:2new}, we characterize foldable polyiamonds containing a hole of positive area: each but one polyiamond is foldable.
	\item In \cref{sec:3}, we characterize the convex foldable polyiamonds: A convex polyiamond folds into \Oct if and only if it contains one of five polyiamonds.
	\item In \cref{sec:4}, we show that every polyiamond of size $\geq 15$ is foldable. A non-foldable polyiamond of size 14 proves that this bound is best possible. {We highlight that an analogous statement for folding polyominoes into cubes does not exist in the grid folding model, i.e., there exist arbitrarily large polyominoes that do not fold into the cube. For instance, large rectangular \new{polyominoes \cite[Corollary 2]{foldingCube_withHoles_Journal}}.}
	\item In \cref{sec:5}, we discuss the reverse question: For an assignment of positive integers to the faces of the octahedron \Oct, does there exist a polyiamond that folds into \Oct such that the number of triangles covering each face is equal to the assigned number? We prove that such a polyiamond exists for any such assignment.
\end{compactitem}

\section{Some Tools}\label{sec:2}

In this section, we present tools for proving or disproving the foldability of a polyiamond into an octahedron.
Firstly, we present a useful connection between 3-colorings of the triangular grid and polyiamonds inheriting the vertex coloring that are folded into the octahedron.
As indicated in \cref{fig:grid}, the triangular grid graph allows for a proper 3-coloring. 
Because every (connected) inner triangulation has at most one 3-coloring (up to exchange of the colors), every polyiamond has a unique 3-coloring which is induced by the triangular grid. If there exists a slit edge along a grid line, the polyiamond graph may have several vertices corresponding to one grid vertex, see also \cref{fig:slitsB}.
Note that each corner of \Oct has a unique non-adjacent corner which we call its \emph{antipodal}. 

\begin{figure}[hb]
	\centering
	\includegraphics[scale=1]{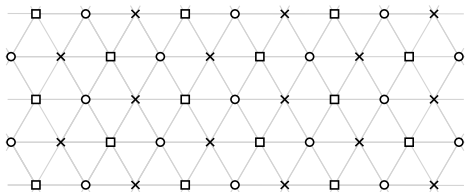}
	\caption{A 3-coloring of the triangular grid.}
	\label{fig:grid}
\end{figure} 

In order to study the non-foldability, we also consider partial foldings. In particular, when relaxing the condition that all faces are covered, we say a polyiamond is \emph{partially} folded into the octahedron.

\begin{restatable}{lem}{lemcolorclasses}\label{lem:colorclasses}
	Let $P$ be a polyiamond with a 3-coloring of its vertices.
	In every (partial) folding of~$P$ to the octahedron~\Oct, the vertices of each color class are mapped to (one corner or a pair of) antipodal corners of \Oct.
\end{restatable}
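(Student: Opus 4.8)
The plan is to match the grid 3-coloring of $P$ with the natural partition of the six corners of \Oct into three antipodal pairs. First I would record the crucial combinatorial fact about \Oct: two corners lie on a common face (equivalently, are joined by an edge) if and only if they are \emph{not} antipodal, so the corner--adjacency graph is the complete tripartite graph $K_{2,2,2}$ whose three parts are precisely the three antipodal pairs. In particular, every face of \Oct uses exactly one corner from each antipodal pair. On the paper side, since we fold in the grid model, the final (partially) folded state maps every triangle of $P$ isometrically onto a face of \Oct; hence the three distinctly colored vertices of a single triangle are sent to the three corners of a face, i.e.\ to one corner from each antipodal pair. This already establishes the claim for the vertices of a single triangle and fixes, for that triangle, a bijection $\sigma$ from the three color classes to the three antipodal pairs.

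Next I would argue that $\sigma$ is forced to be consistent across a shared edge. Let triangles $t$ and $t'$ of $P$ share a non-slit edge $e$. Since $e$ belongs to $P$, its two copies coincide in the folded state, so $e$ is mapped to a single unit segment that is an edge of both image faces $\phi(t)$ and $\phi(t')$ (these faces coincide in the case of a $\pm 180^\circ$ fold). Consequently the two endpoints of $e$ are sent to the two corners spanned by that octahedron edge, in agreement with $\sigma$ on the two colors $c_1,c_2$ occurring on $e$. The remaining vertex of $t'$ carries the third color $c_3$ and is mapped to the remaining corner of $\phi(t')$; because each face meets each antipodal pair exactly once and the two used corners already occupy the pairs $\sigma(c_1),\sigma(c_2)$, this remaining corner must lie in the pair $\sigma(c_3)$. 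Thus $\sigma$ extends over $t'$ without change.

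Finally I would globalize by connectivity. As $P$ is a connected polyiamond, its triangles are linked into a connected dual graph by shared non-slit edges, so starting from an arbitrary triangle and repeatedly applying the previous step propagates the single bijection $\sigma$ to all triangles of $P$. Hence every vertex of a fixed color class $c$ is mapped into the antipodal pair $\sigma(c)$, i.e.\ to one corner or to both corners of that pair, which is exactly the assertion; note that surjectivity of the triangle-face-map is never used, so the argument applies verbatim to partial foldings. The step I expect to require the most care is the local consistency across an edge --- in particular confirming that the remaining corner is genuinely forced into the correct antipodal pair, together with the bookkeeping needed when a fold is flat so that $t$ and $t'$ share their image face. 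The connectivity point also deserves a short remark, since a slit edge may split one grid vertex into several vertices of $P$; these inherit the same color, and as the dual graph remains connected through the non-slit edges the argument is unaffected.
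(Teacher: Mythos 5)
Your proposal is correct and follows essentially the same route as the paper: a local consistency argument on two triangles sharing a (non-slit) edge, propagated over all of $P$ by connectivity. The only cosmetic difference is that the paper phrases the local step directly in terms of the fold angle ($\pm\beta$ sends the two private vertices to antipodal corners, $\pm 180^\circ$ to the same corner), whereas you derive it from the fact that each face of \Oct meets each antipodal pair exactly once; both yield the same forced bijection between color classes and antipodal pairs.
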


%%%%%%%%%
%\lemcolorclasses*
%%%%%%%%%
\begin{proof}
	Consider two neighboring triangles of $P$ and note that their two private vertices have the same color. If their common side is folded by $\pm\beta$, these two vertices are mapped to antipodal corners of \Oct; otherwise the edge is folded by $\pm 180^\circ$ and the two vertices are mapped to the same corner of $\Oct$. The fact that $P$ is connected implies that every color class is mapped to a different set of antipodal corners.
\end{proof}
%%%%%%%%

We repeatedly use  \cref{lem:colorclasses} in order to disprove foldability. Moreover, for folded polyiamonds, \cref{lem:colorclasses}  allows to illustrate the mapping of vertices to corners of \Oct by vertex colorings where antipodal corners of \Oct are represented by the same shape but different colors, for an example consider  \cref{fig:nets}. Clearly, such a vertex coloring induces a triangle-face map.
	
\subsection{Foldability}
A polyiamond~$P$ \emph{contains} a polyiamond~$P'$ if $P'$ can be translated, rotated, and reflected such that all triangles and triangle sides of $P'$ also belong to $P$. Restricting our attention to the triangles, a polyiamond~$P$ \emph{\Tcontains} a polyiamond $P'$ if all triangles of $P'$ belong to $P$. 
 For example, the polyiamond in \cref{fig:slitsB}  does not contain  but \Tcontains the polyiamond  in \cref{fig:slitsA}.
As we will see  in \cref{obs:containment},  neither containment nor $\triangle$-containment of a foldable polyiamond is a sufficient folding criterion.
Nevertheless, we are able to show two sufficient  criteria based on $\triangle$-containment of foldable polyiamonds. 
By zig-zag-folding as indicated in \cref{fig:convexReduction}, every polyiamond can be reduced to a contained convex polyiamond.

\begin{restatable}{lem}{lemcontainingConvex}\label{lem:containingConvex}
	A polyiamond $P$ is foldable if it \Tcontains a convex foldable polyiamond $C$.
\end{restatable}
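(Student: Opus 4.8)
The plan is to show that if $P$ $\triangle$-contains a convex foldable polyiamond $C$, then the extra triangles of $P$ (those not belonging to $C$) can be "folded away" so that they cover faces of the octahedron without disturbing the valid folding of $C$. Since $C$ is foldable and convex, fix a folding of $C$ into $\Oct$ whose triangle-face map is surjective. The goal is to extend this to a folding of all of $P$; once every triangle of $P$ lands on some face of $\Oct$ and every face is still covered (the faces covered by $C$ already suffice for surjectivity), $P$ is foldable by definition.

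First I would make precise the zig-zag-folding operation hinted at in \cref{fig:convexReduction}. The idea is that any triangle of $P$ attached to the boundary of $C$ along a grid edge can be folded flat (by a $\pm 180^\circ$ fold) back onto the region already occupied, and more generally a whole strip or "flap" of triangles sticking out of $C$ can be accordion-folded by alternating $\pm 180^\circ$ folds so that it stacks up on top of triangles of $C$. The key point to verify is that such a $180^\circ$ accordion fold of an external flap is always geometrically realizable and maps each of its triangles onto a face of $\Oct$ already covered by the folded $C$ — this is because a flat fold along a grid edge of the planar paper corresponds, after $C$ is folded onto $\Oct$, to reflecting the attached triangle onto an adjacent face, which is again a face of $\Oct$.

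The main structural step is an induction on the number of triangles in $P \setminus C$. I would argue that because $C$ is convex, every triangle of $P$ not in $C$ lies outside the convex region of $C$, and one can always peel off an "outermost" triangle $t$ — one incident to the current boundary via a grid edge shared with an already-handled triangle — and flat-fold it inward. Folding $t$ by $180^\circ$ onto its neighbor places $t$ on the same face of $\Oct$ as that neighbor, after which we recurse on $P \setminus \{t\}$. Formally, order the triangles of $P \setminus C$ so that each is attached by a shared grid edge to a triangle appearing earlier (possible since $P$ is connected and contains $C$), and apply the flat folds in that order.

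The hard part will be arguing that these successive flat folds do not collide or create a physically infeasible (self-intersecting) folded state, and that they genuinely respect the grid folding model (fold angles only $\pm\beta$ or $\pm 180^\circ$, material confined to the octahedron). Convexity of $C$ is what keeps the accounting clean: it guarantees a consistent peeling order and ensures each external flap folds back over a region of $\Oct$ rather than into empty space. I would handle the feasibility by noting that stacking arbitrarily many triangle layers on a single face is permitted in this model, so no collision argument beyond "each folded triangle lands on an existing face" is needed. Once every triangle of $P$ is mapped to a face of $\Oct$ and the faces covered by the underlying folding of $C$ remain covered, surjectivity of the triangle-face map — and hence foldability of $P$ — follows immediately.
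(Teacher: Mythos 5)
Your construction has a genuine gap, and it is precisely the point your last paragraph dismisses. A folding of $P$ must map \emph{every} glued edge of $P$ consistently: the two triangles incident to a shared edge must map that edge to the same segment of \Oct, since the material cannot tear there. Your peeling procedure fixes placements along a spanning tree of the adjacency graph of $P$ (each new triangle is placed by a flat fold across one chosen edge), but it never verifies consistency along the remaining shared edges, and this is not a layering issue that "stacking is permitted" can absorb. Concretely, consider an interior vertex $v$ of $P$ that is incident to exactly two triangles $a,b$ of $C$, and suppose the fixed folding of $C$ folds the edge between $a$ and $b$ by $\beta$, so $a$ and $b$ lie on two adjacent faces of \Oct meeting at the corner $c$ to which $v$ maps. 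The four \Oct-edges at $c$ form a $4$-cycle, and each of the six triangles around $v$ advances one step $\pm1$ along this cycle; a flat fold between consecutive triangles forces opposite signs, a $\beta$-fold forces equal signs, and closure around $v$ requires the six steps to sum to $0 \bmod 4$. With the $\beta$-fold inside $C$ and your four peeled triangles flat-folded in a chain, the sum is $\pm 2$: the last shared edge (between the final peeled triangle and $b$) is mapped to two different edges of \Oct by its two sides, so your final state is not a folding at all. A valid extension of that folding of $C$ does exist, but it must use $\pm\beta$ folds among the peeled triangles, i.e., the extension has to be chosen globally rather than greedily by flat folds. Note also that your argument never genuinely uses convexity (a peeling order exists by connectivity alone), so if it were correct it would show that \Tcontain{ing} \emph{any} foldable polyiamond implies foldability --- which the paper explicitly refutes in \cref{obs:containment}, where an unfoldable polyiamond \Tcontains a foldable one with the very same triangles, differing only in which edges are glued.

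The paper avoids all of this by reversing the order of operations, and that reversal is the idea your proposal is missing. It first folds $P$ flat \emph{in the plane} onto $C$: for each boundary side $s$ of $C$ it performs alternating $\pm 180^\circ$ folds along the entire grid lines parallel to $s$ (\cref{fig:convexReduction}). A fold along a whole grid line moves all material beyond that line by one global reflection, so consistency along every edge of $P$ is automatic; convexity enters because $C$ is the intersection of the half-planes bounded by the supporting lines of its sides, so once every side is handled the flattened object is exactly $C$ with extra layers stacked on its triangles. Only then is $C$ folded into \Oct, carrying the layers along: each stacked triangle covers the same face as the triangle of $C$ beneath it, and surjectivity is inherited from the folding of $C$. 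In short, fold down to $C$ first and then up onto \Oct; fixing the folding of $C$ first and patching triangles on afterwards is exactly the step that cannot be made to work locally.
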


\begin{proof}
	Firstly, we reduce $P$ to $C$: For every boundary side $s$ of $C$, we fold the triangles of $P$ outside $C$ in a zig-zag-manner. To this end, we fold along the grid lines parallel to $s$ with $+180^\circ$ and $-180^\circ$ folds alternatingly,  as illustrated in \cref{fig:convexReduction}. 

	\begin{figure}[htb]
		\centering
		\includegraphics[page=4]{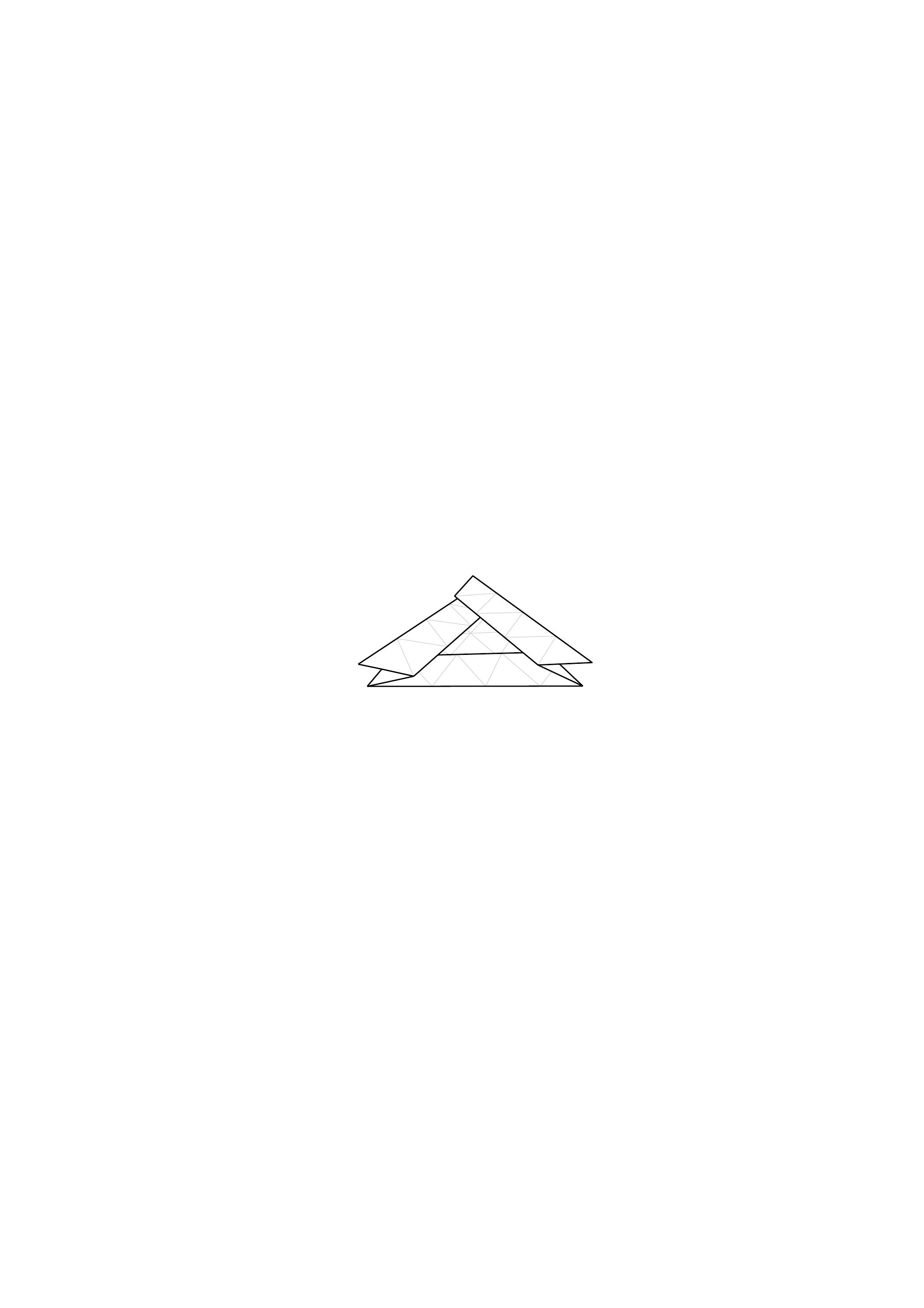}
		\caption{Folding strategy to reduce a polyiamond to a convex subpolyiamond by zig-zag-folding the outside.}
		\label{fig:convexReduction}
	\end{figure} 
	As a result,  the supporting line of $s$ bounds the folded polyiamond.
	Because $C$ is convex and contained in $P$, $P$ can be transformed to $C$ with the above procedure. 
	Secondly, we use the fact that $C$ folds into \Oct.
\end{proof}
%%%%%%%%

 A \emph{net} of 
a  polyhedron is formed by cutting along certain edges and unfolding the resulting connected set to lie flat \new{without intersections}. Nets of the octahedron are  depicted in \cref{fig:nets}.
\begin{figure}[htb]
	\centering
	\includegraphics[page=6]{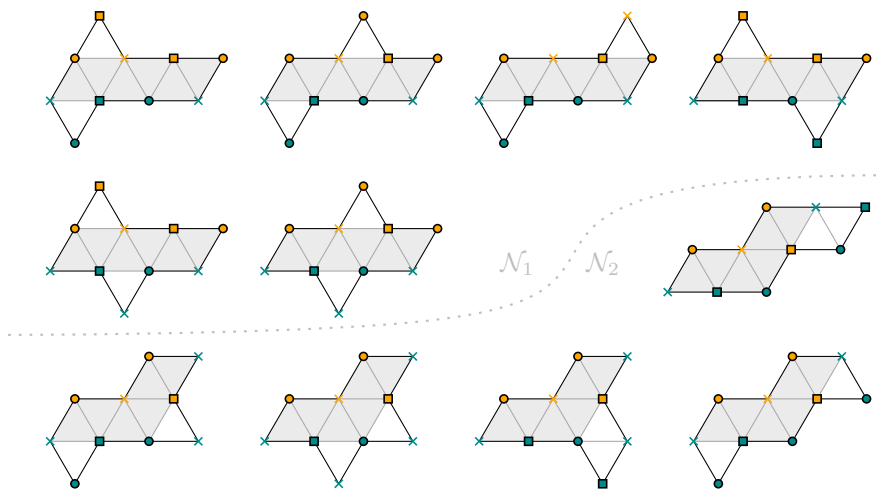}
	\caption{The eleven nets of the octahedron split into two groups $\mathcal N_1$ and $\mathcal N_2$.}
	\label{fig:nets}
\end{figure} 
There exist two interesting facts for nets of three-dimensional regular convex polyhedra~\cite{Buekenhout1998}:  
Firstly, each net  is uniquely determined by a spanning tree of the 1-skeleton of the
polyhedron, i.e.,  the cut edges form a spanning tree of the vertex-edge graph. Secondly, dual polyhedra (e.g., the cube and the octahedron) have the same number of nets. Consequently, there exist eleven octahedron nets. 

We show that \Tcontain{ing} a net is a sufficient folding criterion for a polyiamond.

\begin{restatable}{lem}{lemnets}\label{lem:nets}
	A polyiamond is foldable if it \Tcontains an octahedron net.
\end{restatable}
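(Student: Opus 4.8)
The plan is to reduce Lemma~\ref{lem:nets} to the already-established Lemma~\ref{lem:containingConvex}, thereby avoiding any new folding arguments. The key observation is that a net of the octahedron is itself a polyiamond, and since it folds into $\Oct$ by definition (a net folds into its polyhedron along the cut-tree), it suffices to exhibit, for each of the eleven nets, a convex foldable polyiamond that is $\triangle$-contained in that net. If every net $\triangle$-contains some convex foldable polyiamond $C$, then any polyiamond that $\triangle$-contains a net also $\triangle$-contains $C$ (by transitivity of $\triangle$-containment), and Lemma~\ref{lem:containingConvex} immediately gives foldability.

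First I would recall that the eleven nets split into the two families $\mathcal N_1$ and $\mathcal N_2$ shown in \cref{fig:nets}, and I would look for a small convex foldable polyiamond common to as many nets as possible. The natural candidates are the five convex foldable polyiamonds promised by the convex characterization in \cref{sec:3} (a convex polyiamond folds into $\Oct$ iff it contains one of five specific polyiamonds). Rather than invoking that characterization in full, I would directly identify for each net a concrete convex subpolyiamond and verify its foldability by exhibiting an explicit folding into $\Oct$ --- most cleanly by giving a valid $3$-coloring of its vertices consistent with \cref{lem:colorclasses}, since such a coloring induces a triangle-face map, and then checking surjectivity onto the eight faces.

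Alternatively, and more elegantly, I might not decompose net-by-net at all: since a net already folds into $\Oct$, I could argue that each net itself can be reduced, via the zig-zag scheme of Lemma~\ref{lem:containingConvex}, to a convex foldable core. Concretely, I would verify that every net $\triangle$-contains one of the five base convex polyiamonds from \cref{sec:3}, which is a finite check over eleven small shapes. The main obstacle is precisely this case analysis: one must handle the more ``spread out'' nets (typically those in the family whose triangles form long or branching strips) where the largest inscribed convex subpolyiamond might be smaller than one hopes, and confirm that even there a foldable convex piece of the required type is present. Establishing that this inscribed convex piece is genuinely foldable is the only non-bookkeeping step, and I expect to discharge it by pointing to the convex classification or by an explicit small folding diagram.

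Once the finite verification is complete, the conclusion is immediate: a polyiamond $P$ that $\triangle$-contains an octahedron net $\triangle$-contains the associated convex foldable polyiamond $C$, so $P$ is foldable by \cref{lem:containingConvex}. The whole argument is therefore a short reduction plus a bounded case check, with no new geometric folding constructions required beyond those already proved.
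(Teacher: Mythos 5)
Your reduction cannot work, because the finite check at its core is doomed to fail on every one of the eleven nets: \emph{no} octahedron net \Tcontains a convex foldable polyiamond. By definition, a polyiamond that folds into \Oct induces a surjective triangle-face-map, so it must have at least eight triangles. A net has exactly eight triangles, so a convex foldable polyiamond \Tcontain{ed} in a net would have to use all eight of them, i.e., the net itself would have to be convex. But no net is convex: the only convex polyiamonds with exactly eight triangles are the straight strip of eight triangles, the $2\times 2$ rhombus, and the trapezoid obtained by cutting one corner off the triangle of side~$3$, and none of these folds into \Oct --- for each of them, \cref{lem:colorclasses} exhibits a color class whose vertices cannot cover the four faces at each of two antipodal corners (for instance, in the rhombus one class consists of the center vertex and two opposite corner vertices, incident to $6$, $1$, and $1$ triangles, so at most $4+2=6$ faces are ever covered). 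Hence these three shapes are not nets, every net is non-convex, the convex foldable polyiamond $C$ you want inside each net does not exist (equivalently, every member of \C has size at least nine and cannot fit in a net), and \cref{lem:containingConvex} can never be invoked in the direction you propose. The same obstruction kills your alternative of zig-zag-reducing a net to a ``convex foldable core'': such a core would again be a convex foldable polyiamond with at most eight triangles.

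The paper's proof runs in exactly the opposite direction: rather than finding a convex foldable polyiamond \emph{inside} a net, it places the nets \emph{inside} two larger convex polyiamonds $S_1$ and $S_2$ (the smallest convex polyiamonds containing the nets of the groups $\mathcal N_1$ and $\mathcal N_2$), each equipped with a vertex-to-corner coloring that is simultaneously consistent with all nets of its group and thus describes a folding of $S_i$ into \Oct. Given $P$ \Tcontain{ing} a net, the zig-zag reduction from the proof of \cref{lem:containingConvex} folds $P$ onto the region of $S_i$; the folded polyiamond need not occupy all of $S_i$, but it still \Tcontains the net, whose eight triangles carry pairwise different face labels, so all eight faces of \Oct are covered and any missing triangles of $S_i$ are harmless. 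Note this is \emph{not} a black-box application of \cref{lem:containingConvex}, since $S_i$ itself is generally not \Tcontain{ed} in $P$; the additional argument that a missing triangle only matters if it is essential for covering a face is precisely the step your proposal lacks and cannot be replaced by containment alone.
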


\begin{proof}
	We partition the set of nets into two groups $\mathcal N_1$ and  $\mathcal N_2$ as illustrated in \cref{fig:nets}. Note that within each group, the vertex-corner-maps (can be shifted such that they) are consistent on common triangles. For $i=1,2$, we consider the smallest convex polyiamond $S_i$ containing the nets of $\mathcal N_i$ as depicted in \cref{fig:netsProof}.  The coloring of the vertices gives a mapping to the corners of \Oct
		 and thus describes a folding of $S_i$ into $\Oct$.
	
	%%%%%%%%
	\begin{figure}[htb]
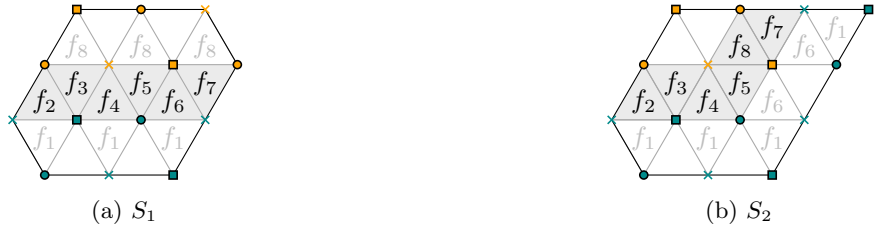

		\centering
		\begin{subfigure}[t]{.45\textwidth}
				\centering
				\includegraphics[page=11]{octahedron}
				\caption{$S_1$}
				\label{fig:S1}
		\end{subfigure}\hfill
			\begin{subfigure}[t]{.45\textwidth}
		\centering
		\includegraphics[page=12]{octahedron}
		\caption{$S_2$}
		\label{fig:S2}
	\end{subfigure}
		\caption{Illustration for the proof of \cref{lem:nets}.}
		\label{fig:netsProof}
	\end{figure} 
	%%%%%%%%
	
	By \cref{lem:containingConvex}, all polyiamonds that contain a convex polyiamond can be reduced to the convex polyiamond. By construction, not all triangles of $S_i$ are present in each net of~$\mathcal N_i$. However, each net has at least eight triangles with pairwise different labels.
	The non-existence of a triangle harms the foldability only if it is essential to cover a face.
\end{proof}

\subsection{Non-foldability}
The following lemma is a crucial tool to disprove foldability. 
To this end, let \hex and \dhex denote the polyiamonds depicted in \cref{fig:C6a,fig:C10b}, respectively.

\begin{figure}[htb]
	\centering
		\begin{subfigure}[t]{.3\textwidth}
		\centering
		\includegraphics[page=4]{gridObs}
		\caption{A triangle-face-map of \hex covering three faces of \Oct.}
		\label{fig:C6a}
	\end{subfigure}\hfill
		\begin{subfigure}[t]{.3\textwidth}
			\centering
			\includegraphics[page=3]{gridObs}
			\caption{A triangle-face-map of \hex covering four faces of \Oct.}
			\label{fig:C6b}
		\end{subfigure}\hfill
		\begin{subfigure}[t]{.33\textwidth}
			\centering
			\includegraphics[page=16]{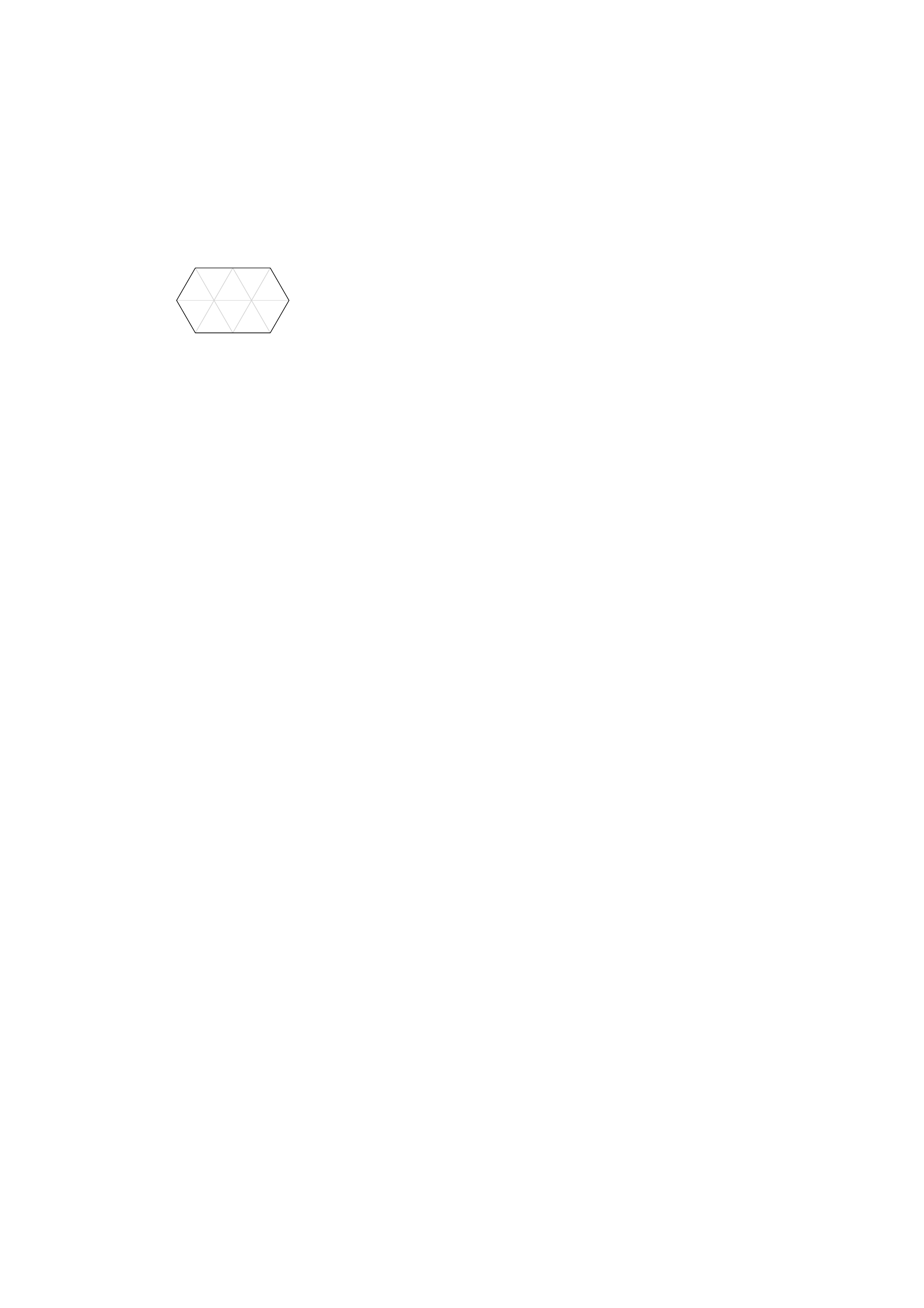}
			\caption{Any triangle-face-map of \dhex covers  at most six faces of~\Oct.}
			\label{fig:C10b}
		\end{subfigure}
	\caption{Illustration of \cref{lem:hex} and its proof.}
	\label{fig:C6}
\end{figure} 

\begin{restatable}{lem}{lemhex}\label{lem:hex}
	Let $P$ be a polyiamond (partially) folded into the octahedron \Oct.
	\begin{compactenum}[(i)]
		\item \label{prop:hex} Every \hex that is \Tcontain{ed} in $P$ covers at most four different faces of \Oct.
		\item \label{prop:hex2}  If a \hex in $P$ covers exactly three or four faces, then the induced triangle-face-mapping is unique (up to symmetry) and  as depicted in \cref{fig:C6a,fig:C6b}, respectively.
		\item \label{prop:dhex}  Every \dhex contained in $P$ covers at most six different faces of \Oct.
	\end{compactenum}
\end{restatable}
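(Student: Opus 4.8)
The common engine for all three parts is \cref{lem:colorclasses}. The plan is to fix the three antipodal pairs of corners of \Oct so that they correspond to the three color classes of the triangular grid. Since \Oct is the cross-polytope, its eight faces are exactly the eight ways of choosing one corner from each antipodal pair; consequently every corner lies on exactly four faces, and every pair of non-antipodal (hence adjacent) corners lies on exactly two common faces. These two counts are the only facts about \Oct I would use.

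For \eqref{prop:hex} I would observe that \hex has a single central vertex $v$ incident to all six of its triangles. By \cref{lem:colorclasses}, the (partial) folding sends $v$ to one fixed corner $c$, so every triangle of \hex is mapped to a face containing $c$. As only four faces contain $c$, at most four faces are covered.

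For \eqref{prop:hex2} I would parametrize a folding by the images of the six outer vertices of \hex. These split, by color, into two antipodal pairs, with three vertices mapping into each pair; writing each triangle's face as a pair $(a,b)$, with $a$ in the first pair and $b$ in the second, the six faces form a closed walk that alternately changes the $a$- and the $b$-coordinate. At least three faces are covered only if \emph{both} coordinates take both of their possible values, so within each color triple exactly one vertex is a ``minority'' whose image differs from the other two. I would then run the short finite case analysis on the cyclic distance between the two minority vertices around \hex: when they sit at opposite outer vertices exactly three faces arise, and when they are neighbors exactly four arise, and in each case the triangle--face map is forced, up to the symmetries of \hex and of \Oct, to be the one in \cref{fig:C6a,fig:C6b}. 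This finite check---tracking which vertex configurations coincide under symmetry and matching them to the two figures---is the only laborious step; everything else is immediate.

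For \eqref{prop:dhex} I would use that \dhex has two adjacent central vertices $v$ and $w$, and that every one of its ten triangles is incident to $v$ or to $w$. Their images $f(v)$ and $f(w)$ lie in different color pairs, hence are adjacent corners of \Oct, so every covered face contains $f(v)$ or $f(w)$. By inclusion--exclusion the number of such faces is $4+4-2=6$, since two adjacent corners share exactly two faces; this gives the bound in \cref{fig:C10b}.
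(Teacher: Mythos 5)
Your proposal is correct. For parts (i) and (ii) it is essentially the paper's own proof: part (i) is the identical central-vertex argument, and your ``minority'' vertices are exactly the paper's ``lonely'' vertices, derived from the same observation (a color class collapsing to one corner leaves at most two faces covered, so each outer class splits $2$--$1$), followed by the same case distinction on the relative position of the two exceptional vertices. One remark on (ii): your pairing of cases with counts --- opposite minorities give three faces, neighboring minorities give four --- is the mathematically correct one. A direct check confirms it: the images of the six triangles form a closed walk on the $4$-cycle of faces around the image $c$ of the center; with neighboring minorities the walk wraps once around $c$ (four faces), with opposite minorities it backtracks (three faces). The paper's prose pairs ``a) adjacent or b) opposite'' with ``three and four, respectively,'' which is the reverse and appears to be a slip there (it also clashes with the captions of \cref{fig:C6a,fig:C6b}), so do not be troubled by that mismatch.

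For part (iii) you take a genuinely different and cleaner route. The paper fixes one center $v$ of \dhex, considers the corner $\overline{c}$ antipodal to the image of $v$, and shows by a two-case analysis over the other two vertices $v_1,v_2$ in $v$'s color class that at most two of the four faces at $\overline{c}$ can be covered, giving $8-2=6$. You instead use that the ten triangles of \dhex are precisely those incident to one of the two adjacent centers $v,w$, that the images of $v$ and $w$ are distinct non-antipodal (hence adjacent) corners, and conclude by inclusion--exclusion that at most $4+4-2=6$ faces are covered. This removes the case analysis entirely; in fact the paper's harder case (both $v_1,v_2$ mapped to $\overline{c}$) is settled there by noting that the four triangles at $v_1,v_2$ all contain $w$, which is exactly the structural fact your argument exploits globally. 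The only thing the paper's longer argument buys is slightly finer information --- that the uncovered faces can be taken incident to $\overline{c}$ --- which is never used later, so your version fully serves the lemma's role in the paper. Note that both your argument and the paper's use \cref{lem:colorclasses} only to know that vertex images are well defined and land in the appropriate corner pairs; for the adjacency of $f(v)$ and $f(w)$ you could even argue more directly that $v$ and $w$ span an edge of a triangle, whose image is an edge of \Oct.
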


\begin{proof}
	Let $v$ denote the central vertex of \hex. In the folded state, $v$ is mapped to a corner $c$ of the octahedron which is (like every corner) incident to four faces.
	\begin{compactenum}[(i)]
		\item  Because every triangle of \hex is incident to vertex $v$, these triangles cover a subset of the four faces incident to $c$.
		\item 
		We consider a 3-coloring as indicated in \cref{fig:C6a}.
		If all circle or all square vertices  map to a same corner, then \hex covers at most two faces of~\Oct, namely the ones incident to the cross and circle vertex. Hence, if $P$ covers three or four faces, then exactly two vertices of each class map to the same corner and the third vertex of each class maps to its antipodal, we call this vertex \emph{lonely}. We distinguish whether the two lonely vertices are a) adjacent or b) opposite in \hex, see \cref{fig:C6a,fig:C6b}. It follows that the number of covered faces is three and four, respectively.
		
		\item 
		We consider a 3-coloring of \dhex as illustrated in \cref{fig:C10b} and use the fact that each color class is mapped to antipodal corners by \cref{lem:colorclasses}. We denote the three cross vertices by $v, v_1, v_2$ as illustrated in \cref{fig:C10b}; similarly, we denote the corner of \Oct to which $v$ is mapped by $c$.  If at most one $v_i$ (which are both incident to only two triangles) is mapped to the antipodal corner $\overline c$ of $c$, then at most two faces incident to $\overline c$ can be covered.
		If both $v_1$ and $v_2$ are mapped to $\overline c$, then the four incident triangles of $\overline c$ share a common edge. Consequently, they may cover at most two incident faces. 
		In other words, in both cases at least two faces (incident to $\overline c$) remain uncovered and thus a \dhex covers at most six faces.
	\end{compactenum}
This completes the proof.
\end{proof}

\section{On Slits and Holes}\label{sec:2new}
In this section, we consider polyiamonds with slits and holes. First of all,  we remark 
 that removing individual edges from a foldable polyiamond does not destroy its foldability as long as  connectivity is maintained. This allows us to focus on polyiamonds without slit edges, i.e.,  \emph{sealing} slit edges may only increase the level of difficulty to prove foldability.
On the other hand, we note that slits may in fact enable foldability.

\begin{obs}\label{obs:containment}
	Let  $P$ be a polyiamond 	($\triangle$-)containing a foldable polyiamond $P'$. Then, the polyiamond $P$ may not be foldable.
\end{obs}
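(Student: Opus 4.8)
The plan is to establish the statement by producing a single explicit counterexample: a polyiamond $P$ that is not foldable but nevertheless $\triangle$-contains (and hence contains) a foldable polyiamond $P'$. Since \cref{lem:nets,lem:containingConvex} already show that $\triangle$-containing a net or a convex foldable polyiamond suffices for foldability, the witness $P'$ cannot be of either type; its foldability must genuinely rely on a slit, so that passing to the larger (sealed) super-polyiamond $P$ destroys it. This is exactly the phenomenon announced in the preceding paragraph, that slits may enable foldability.

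Concretely, I would first fix a small polyiamond $P'$ carrying a slit edge and certify that $P'$ folds into \Oct by exhibiting a $3$-coloring of its vertices that, via \cref{lem:colorclasses}, realizes a valid triangle-face-map covering all eight faces; the slit is precisely what allows two triangles that would otherwise share a vertex to be sent to non-antipodal corners. I would then take $P$ to be $P'$ with this slit sealed. By construction $P$ and $P'$ have the same set of triangles, so $P$ trivially $\triangle$-contains $P'$; and since every side of $P'$ is also a side of $P$, the polyiamond $P$ even contains $P'$. Thus one example witnesses both the containment and the $\triangle$-containment versions of the claim simultaneously.

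The remaining and genuinely delicate task is to prove that $P$ is not foldable, which is a statement about \emph{every} folding, not just about one. Here I would invoke \cref{lem:hex}: sealing the slit forces the triangles around the relevant vertex into a full \hex (or \dhex) cluster, all of whose triangles are incident to a common vertex $v$ and therefore, in any folding, are mapped to faces incident to a single corner $c$ (respectively to $c$ and its antipodal). By the face-count bounds of \cref{lem:hex}, such a cluster can cover at most four (resp.\ six) faces, and a short case analysis via \cref{lem:colorclasses} shows that the remaining triangles of $P$ cannot make up the deficit, so at least one face of \Oct stays uncovered.

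The main obstacle is the non-foldability argument of the last paragraph rather than the construction itself: I must rule out all foldings of $P$, which calls for the structural color-class bound of \cref{lem:hex} rather than ad hoc checking of individual folded states. Moreover, I must choose $P'$ small enough that sealing its slit provably creates the \hex/\dhex obstruction while keeping $P'$ itself foldable; these two requirements pull in opposite directions and essentially pin down the example.
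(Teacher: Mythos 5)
Your proposal is correct and takes essentially the same approach as the paper: the paper's witness is exactly a slitted foldable polyiamond (\cref{fig:slitsB}) whose sealed version $\overline{C}_4$ (\cref{fig:slitsA}) is proven non-foldable in \cref{prop:unfoldable12} via the \hex/\dhex face-count bounds of \cref{lem:hex} and the color-class argument of \cref{lem:colorclasses}, just as you outline. The only cosmetic difference is that the paper certifies foldability of the slitted polyiamond by transforming it along the slits into a shape that \Tcontains an octahedron net and invoking \cref{lem:nets}, rather than by a direct 3-coloring certificate.
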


As we show in \cref{prop:unfoldable12} in the proof of \cref{thm:convex}, the polyiamond $P$ in \cref{fig:slitsA} 
\begin{figure}[thb]
	\centering
	\begin{subfigure}[t]{.35\textwidth}
		\centering
		\includegraphics[page=2]{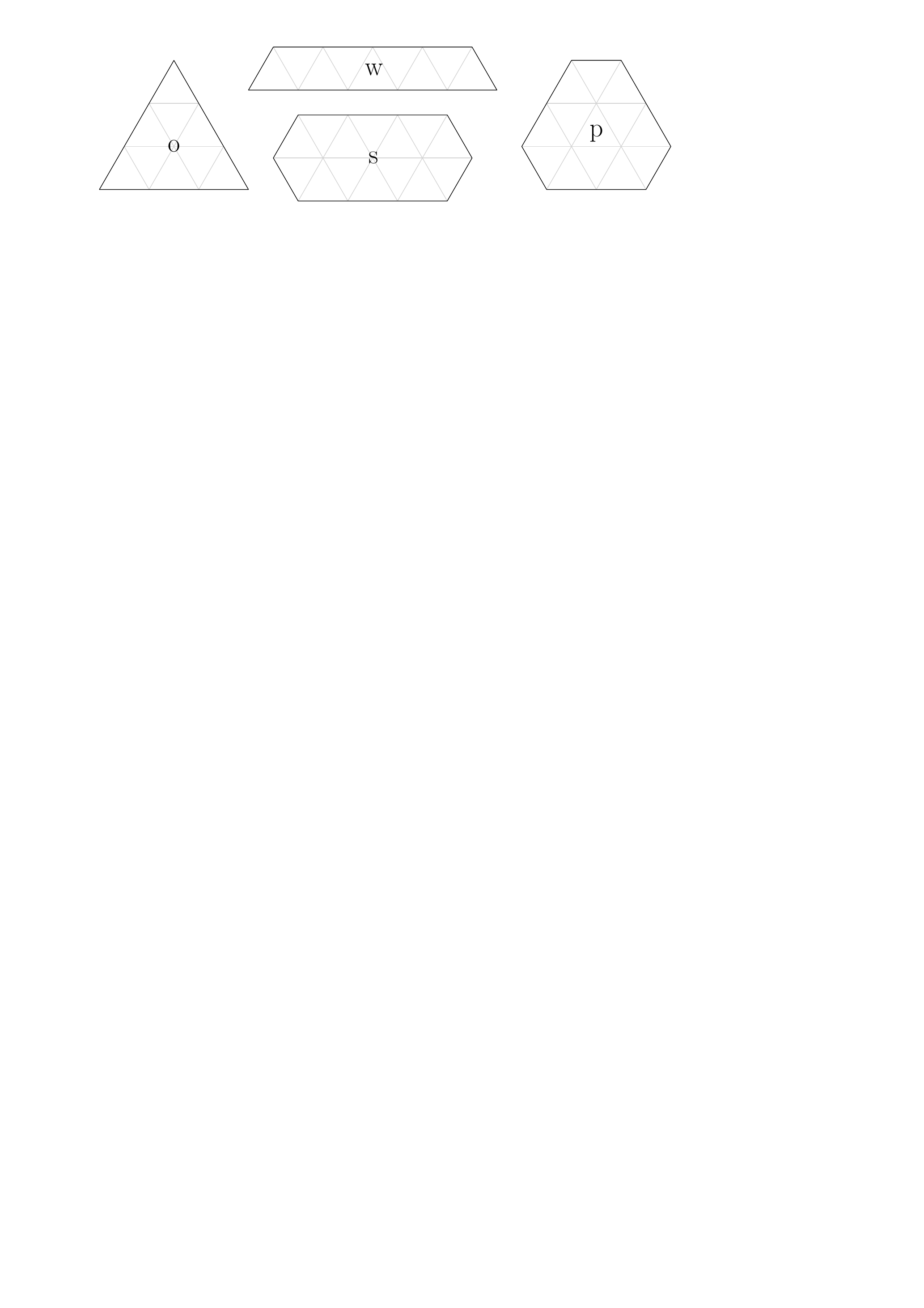}
		\caption{By \cref{thm:convex}, the depicted polyiamond does not fold into~\Oct.
%			 (\cref{prop:unfoldable12})
		}
		\label{fig:slitsA}
	\end{subfigure}\hfill
	\begin{subfigure}[t]{.55\textwidth}
		\centering
		\includegraphics[page=15]{convex_unfoldable}
		\caption{With additional slit edges, the polyiamond folds into \Oct. }
		\label{fig:slitsB}
	\end{subfigure}\hfil
	\caption{Illustration for \cref{obs:containment}.}
	\label{fig:slits}
\end{figure}
does not fold into~$\Oct$, while the polyiamond $P'$ with additional slit edges in \cref{fig:slitsB}
 can be transformed into a polyiamond \Tcontain{ing} a net. Hence, $P'$ is foldable by \cref{lem:nets}.

We now characterize foldable polyiamonds with holes of positive area. Let $O$ denote the polyiamond illustrated in \cref{fig:hole0}.

\begin{restatable}{theorem}{thmHoles}\label{thm:Holes}
	Let $P$ be a polyiamond containing a hole~$h$ of positive area. Then $P$ folds into~\Oct if and only if it is not the polyiamond $O$.
\end{restatable}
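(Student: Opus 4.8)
The plan is to prove both directions of the characterization, with the bulk of the work lying in the ``if'' direction. For the easy direction, I would argue that the specific polyiamond $O$ from \cref{fig:hole0} does not fold into \Oct. Here I would invoke \cref{lem:colorclasses} and especially the size/covering constraints of \cref{lem:hex}: the octahedron has eight faces, so a foldable polyiamond must cover all eight, and a small polyiamond wrapped around a positive-area hole is severely constrained. I would examine the \hex's and/or the \dhex that $O$ necessarily \Tcontain s and use \cref{lem:hex}\eqref{prop:hex} and \eqref{prop:dhex} to bound the number of distinct faces any (partial) folding of $O$ can cover strictly below eight, contradicting surjectivity. This reduces to a short finite case analysis on the triangle-face-maps allowed by the color classes.

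For the ``if'' direction, I would show that \emph{every} polyiamond with a positive-area hole, other than $O$, is foldable. The natural strategy is a reduction to \cref{lem:nets} and \cref{lem:containingConvex}: the goal is to exhibit, inside any such $P$, a sub-structure that can be folded (after sealing or adding slits, which by the opening remark of \cref{sec:2new} only helps) to \Tcontain\ an octahedron net. First I would note that a hole of positive area forces $P$ to contain a full cycle of triangles surrounding the hole; the smallest such cycle around a single triangular cell already gives a substantial collection of triangles. I would then perform a case distinction on the size and shape of the hole $h$. For the minimal hole (a single grid triangle), the surrounding ring together with the one extra triangle that distinguishes $P$ from $O$ should, after appropriate zig-zag reductions, \Tcontain\ one of the eleven nets; the exceptional case where it does not is precisely $O$. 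For larger holes, the perimeter is long enough that the enclosing band of triangles comfortably \Tcontain s a net by \cref{lem:nets}, so foldability is immediate.

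The key technical tool throughout is that \cref{lem:nets} and \cref{lem:containingConvex} let me certify foldability by finding a net inside $P$ up to $\triangle$-containment, and the remark that sealing slits never destroys foldability lets me ignore the precise edge set of $P$ and work purely with its triangles. So the argument becomes: classify, up to symmetry, the boundary configurations that a minimal positive-area hole can have, and for each verify that the forced triangles admit one of the two net-groups $\mathcal N_1,\mathcal N_2$ from \cref{fig:nets}, with $O$ being the unique deficient configuration. I expect the main obstacle to be exactly this boundary case analysis for the smallest hole: one must enumerate how the ring of triangles around a unit triangular hole can be extended to the rest of $P$ and check net-containment in each, while being careful that the three color classes are all realized (so that the folding can reach all three antipodal corner-pairs of \Oct and cover all eight faces). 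Showing that $O$ is the \emph{only} failure, rather than merely \emph{a} failure, is the delicate part and will likely require combining the covering bounds of \cref{lem:hex} to rule out foldability of $O$ while verifying that any single additional triangle beyond $O$ repairs net-containment.
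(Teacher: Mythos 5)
Your treatment of the ``only if'' direction (non-foldability of $O$) is fine and matches the paper, which proves it by the same covering/coloring argument it uses for $\overline{C}_4$ (\cref{prop:unfoldable12}). The genuine gap is in the ``if'' direction: your plan rests on certifying foldability via net containment (\cref{lem:nets}, possibly after zig-zag reductions, together with \cref{lem:containingConvex}), but the polyiamonds at the heart of this theorem provably \Tcontain \emph{no} octahedron net, so this certificate does not exist. Take $P=O\cup\{t\}$, the twelve-triangle ring around a unit-triangle hole plus one extra triangle $t$; by the theorem, $P$ is foldable. Every connected $8$-triangle subpolyiamond of $P$ is an $8$-arc of the ring or a $7$-arc with $t$ attached. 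A net folded into \Oct covers the eight faces bijectively, so none of its creases may be folded by $\pm180^\circ$; but then consecutive triangles must land on adjacent faces, and the induced triangle-face-map of such an arc is completely forced by its shape (up to symmetries of \Oct). Tracing this forced map around the ring (exactly in the style of the proof of \cref{lem:hex}) shows that the ring closes up onto only \emph{seven} faces, that every $7$-arc covers at most six distinct faces, and every $8$-arc at most seven; the pendant triangle $t$ adds at most one more face. Hence no $8$-triangle subpolyiamond of $P$ is a net, so your claim that ``the ring plus one extra triangle contains a net, with $O$ the unique exception'' is false. The analogous claim for larger holes fails as well: the $14$-triangle ring around a parallelogram hole (the paper's $P_c$, \cref{fig:holeC}) also \Tcontains no net, yet it is foldable. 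Reductions cannot rescue this, because \cref{lem:nets} and \cref{lem:containingConvex} need a net or a convex foldable polyiamond \Tcontained in the flat polyiamond itself, and the largest convex subpolyiamonds of these rings are strips of at most five or six triangles, far smaller than any member of \C.

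The structural point your proposal misses is that ring-like polyiamonds around holes can only be folded into \Oct using $\pm180^\circ$ folds, covering some faces more than once; no subpolyiamond folds onto the eight faces injectively. This is precisely why the paper argues differently: it zig-zag-reduces $P$ onto small hole-gadgets with explicit multi-cover triangle-face-maps --- $P_c$ when the hole contains two neighboring triangles, and $P_a$, or a subpolyiamond of $P_b$ containing a triangle labelled $f_8$, when the hole is a single triangle and $P\neq O$ (\cref{fig:hole}). Any repair of your argument needs such explicit foldings (or some mechanism beyond net containment) for the one-triangle-thick rings; the case analysis you outline would otherwise terminate in configurations where no net exists. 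A minor further slip: the remark opening \cref{sec:2new} states that \emph{removing} edges preserves foldability (so it suffices to fold the sealed triangle set); sealing slit edges can destroy foldability, cf.\ \cref{obs:containment}. Your intended use of the remark is legitimate, but the justification as you state it is backwards.
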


%%%%%%%%%
%\thmHoles*
%%%%%%%%%
\begin{proof}
	The non-foldability of $O$ is analogous to the proof that the polyiamond depicted in \cref{fig:slitsA} is non-foldable, see \cref{prop:unfoldable12} below.

	For the reverse direction, we focus on a largest hole $h$ with positive area and distinguish two cases:

	\begin{figure}[b]
		\centering
		\begin{subfigure}[t]{.25\textwidth}
			\centering
			\includegraphics[page=8]{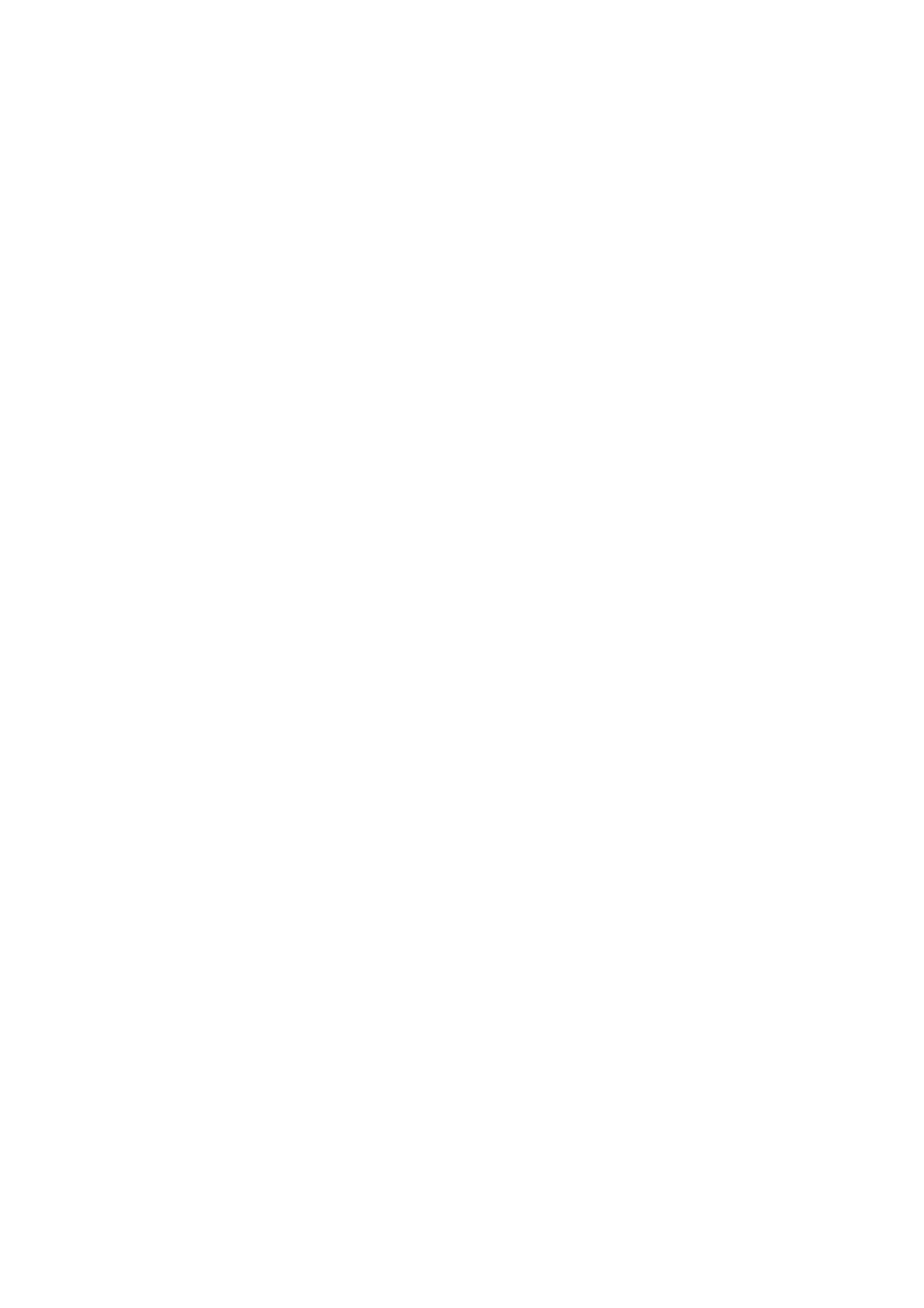}
			\caption{Polyiamond $P_a$}
			%		\caption{small hole}
			\label{fig:holeA}
		\end{subfigure}\hfil
		\begin{subfigure}[t]{.25\textwidth}
			\centering
			\includegraphics[page=5]{convex_foldable}
			%		\caption{small hole}
			\caption{Polyiamond $P_b$}
			\label{fig:holeB}
		\end{subfigure}\hfil
		\begin{subfigure}[t]{.25\textwidth}
			\centering
			\includegraphics[page=7]{convex_foldable}
			%		\caption{general case}
			\caption{Polyiamond $P_c$}
			\label{fig:holeC}
		\end{subfigure}\hfil
		\begin{subfigure}[t]{.2\textwidth}
		\centering
		\includegraphics[page=9]{convex_foldable}
		\caption{Polyiamond $O$}
		\label{fig:hole0}
	\end{subfigure}\hfil
		
		\caption{llustration for the proof of \cref{thm:Holes}.}
		\label{fig:hole}
	\end{figure} 

	If $h$ contains two neighboring triangles, then we reduce $P$ to the polyiamond~$P_c$ depicted in \cref{fig:holeC} as follows: we choose two neighboring triangles of $h$ which form a (potentially smaller) hole $h'$ in the form of a parallelogram. Then, we fold all triangles that do not touch $h'$ with a vertex or edge by zig-zag-folding the outside as in \cref{fig:convexReduction}. This results in the polyiamond~$P_c$ because $h$ and thus $h'$ are enclosed by a cycle of triangles of $P$. Moreover, it is easy to check that $P_c$ is foldable, e.g., when inducing the triangle-face-map depicted in \cref{fig:holeC}.
	
	It remains to consider the case that $h$ contains \new{exactly one} triangle and $P$ is not $O$. 
	If $P$ can be reduced (by zig-zag-folding) to the polyiamond $P_a$ depicted in \cref{fig:holeA}, then $P$ folds into~\Oct \new{by \cref {lem:containingConvex}}. 
	Otherwise, we use zig-zag-folds to obtain a subpolyiamond $P'$ of $P_b$ depicted in \cref{fig:holeB}. Because $P$ is different from $O$ and cannot be reduced to $P_a$, this ensures that $P'$  has at least one triangle with label~$f_8$. Because $P_b$ folds into \Oct, so does $P$. 
\end{proof}
%%%%%%%%

% !TEX root = main

\section{Characterization for Convex Polyiamonds}\label{sec:3}
In this section, we characterize convex foldable polyiamonds.
Let $\mathcal C $ denote the set of five convex polyiamonds depicted in \cref{fig:convex_foldable}.

\begin{figure}[htb]
	\centering
	\includegraphics[page=12]{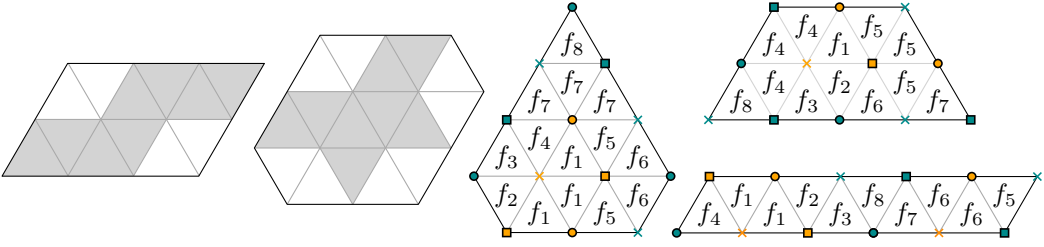}
	\caption{Illustration for \cref{thm:convex}; the set \C of foldable polyiamonds and their foldings. }
	\label{fig:convex_foldable}
\end{figure}

\begin{restatable}{theorem}{thmConvex}\label{thm:convex}
	A convex polyiamond $P$ folds into \Oct if and only if it contains one of the five polyiamonds in \C.
\end{restatable}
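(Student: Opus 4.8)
The plan is to prove the theorem in two directions. For the \emph{if} direction, suppose the convex polyiamond $P$ contains one of the five polyiamonds in $\mathcal C$. Each element of $\mathcal C$ is itself convex and is shown to be foldable in \cref{fig:convex_foldable} via an explicit triangle-face-map. Since $P$ contains (and hence $\triangle$-contains) such a convex foldable polyiamond, \cref{lem:containingConvex} immediately yields that $P$ folds into $\Oct$. So this direction is essentially a direct appeal to the earlier tools, once one checks that the five exhibited foldings are valid surjective triangle-face-maps.

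The harder direction is \emph{only if}: every convex foldable polyiamond must contain one of the five members of $\mathcal C$. Here I would argue by contraposition and structure the proof around the finite combinatorial catalogue of convex polyiamonds. A convex polyiamond is determined by a small number of side-length parameters (the triangular grid admits only triangles, trapezoids, parallelograms, hexagonal and pentagonal convex shapes, each described by at most three or four integer side lengths). The key step is to show that the property ``contains no member of $\mathcal C$'' forces $P$ into a short, explicit list of small convex shapes, and then to rule out foldability for each. The monotonicity here is crucial: if a convex polyiamond $Q$ contains one of the five, then so does every larger convex polyiamond containing $Q$; equivalently, the non-containing convex polyiamonds form a downward-closed family, which one can enumerate up to some bounded size. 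I expect the enumeration to reduce to finitely many minimal ``forbidden-free'' convex shapes (the maximal convex polyiamonds containing none of $\mathcal C$).

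For each such remaining convex polyiamond $P$, I would prove non-foldability using the color-class and hexagon tools. Concretely, \cref{lem:colorclasses} constrains how the three color classes of $P$ map onto antipodal corner pairs of $\Oct$, and \cref{lem:hex}\eqref{prop:hex} bounds the number of faces any \hex in $P$ can cover (at most four), with \eqref{prop:hex2} pinning down the map exactly when three or four faces are covered; \eqref{prop:dhex} gives the analogous bound of six faces for any \dhex. The strategy is to show that in a convex $P$ avoiding all of $\mathcal C$, one can always locate a \hex or \dhex (or a small collection of them sharing vertices) whose forced triangle-face-maps are jointly incompatible with covering all eight faces of $\Oct$ simultaneously; that is, the rigidity from \cref{lem:hex,lem:colorclasses} caps the total coverage below $8$. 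This is where \cref{prop:unfoldable12}, referenced in the excerpt, presumably does the decisive work for the largest critical case (the polyiamond of \cref{fig:slitsA}).

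The main obstacle, I expect, is making the enumeration of forbidden-free convex polyiamonds genuinely \emph{finite} and \emph{exhaustive} rather than hand-wavy: one must verify that beyond a certain size every convex polyiamond necessarily contains one of the five, so that only finitely many cases need explicit non-foldability arguments. Establishing this size threshold cleanly, and then handling each maximal case with a uniform coloring/\hex argument rather than ad hoc per-shape reasoning, is the delicate part. A secondary subtlety is ensuring the containment in the \emph{if} direction respects side-edges (true containment, not merely $\triangle$-containment), but since the members of $\mathcal C$ are convex and we invoke \cref{lem:containingConvex} through $\triangle$-containment, this causes no difficulty.
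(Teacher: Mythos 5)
Your proposal follows essentially the same route as the paper: the \emph{if} direction via \cref{lem:containingConvex} applied to the five convex foldable members of \C, and the \emph{only-if} direction by enumerating the convex \Cfree polyiamonds as a downward-closed family, reducing to the inclusion-wise maximal ones, and ruling out each with \cref{lem:colorclasses} and \cref{lem:hex}. The paper carries out exactly this plan: its bottom-up construction (\cref{fig:convex_foldable_constructionTREE}) produces four maximal \Cfree polyiamonds, each shown non-foldable by the coloring/\hex counting arguments you describe, with \cref{prop:unfoldable12} indeed settling the polyiamond of \cref{fig:slitsA}.
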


\begin{proof}
	First, we show that a convex polyiamond $P$ folds into \Oct if it ($\triangle$-)contains a polyiamond in~\C. Note that each polyiamond in  \C is convex. Hence, by \cref{lem:containingConvex}, it suffices to present folding strategies for the polyiamonds in $\C$, see \cref{fig:convex_foldable}. While two polyiamonds contain an octahedral net, we present explicit strategies for the remaining three.

	Second, we show that every convex polyiamond that folds into \Oct contains a polyiamond from \C. 
	To do so, we construct all convex \emph{\Cfree} polyiamonds, i.e., all convex polyiamonds that contain none of the five polyiamonds in \C. The construction is as follows, for an illustration
	consider \cref{fig:convex_foldable_constructionTREE}:
	
	\begin{figure}[p]
		\centering
		\includegraphics[page=7]{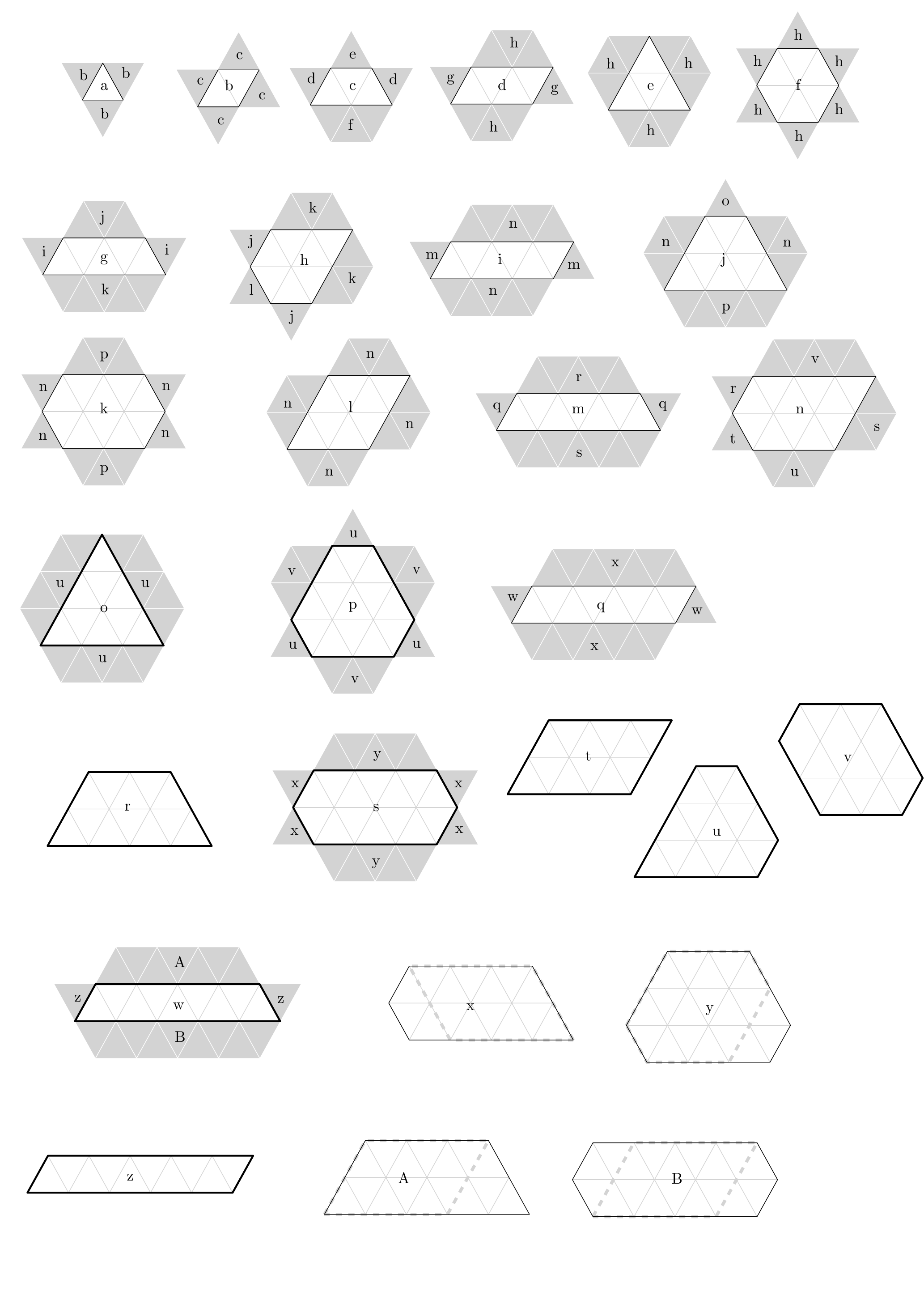}
		\caption{Construction of all \Cfree polyiamonds;
			the  inclusion-wise maximal \Cfree polyiamonds $o$, $p$, $s$, and $w$  are highlighted in red. 
		}
		\label{fig:convex_foldable_constructionTREE}
	\end{figure}

	We start with the unique polyiamond of size 1. Then, we consider all possibilities to enlarge every constructed polyiamond by one triangle and extend it to the smallest convex polyiamond contain{ing} it, i.e.,  we add just enough triangles such that the resulting polyiamond is convex again. We stop when we encounter a polyiamond from \C or a polyiamond contain{ing} one of them.
	
	By their convexity and \cref{lem:containingConvex}, it suffices to show the non-foldability of the inclusion-wise maximal \Cfree polyiamonds. The construction shows that the set~\nC of inclusion-wise maximal \Cfree polyiamonds consists of the four polyiamonds $\overline{C}_1:=$ o, $\overline{C}_2:=$ w, $\overline{C}_3:=$ s, and $\overline{C}_4:=$ p, i.e., each \Cfree polyiamond is contain{ed} in some polyiamond in \nC. It remains to show that all of these do not fold into \Oct.

	%%%%%%%%%%%%%%%%%%%%%%%%%%%%
	\begin{claim}
		The polyiamond $\overline{C}_1$ does not fold into \Oct.
	\end{claim}
	%%%%%%%%%%%%%%%%%%%%%%%%%%%%
	The polyiamond $\overline{C}_1$ contains a \hex, see \cref{fig:notC1}.  By \cref{lem:hex}(\cref{prop:hex}),  the contained \hex covers at most four faces of the octahedron~\Oct. Hence, in every partial folding of $\overline{C}_1$ into~\Oct,  $\overline{C}_1$  covers at most seven faces of \Oct. Consequently, it does not fold into \Oct.
	%%%%%%%%%%%%%%%%%%%%%%%%%%%%
	
	\begin{figure}[htb]
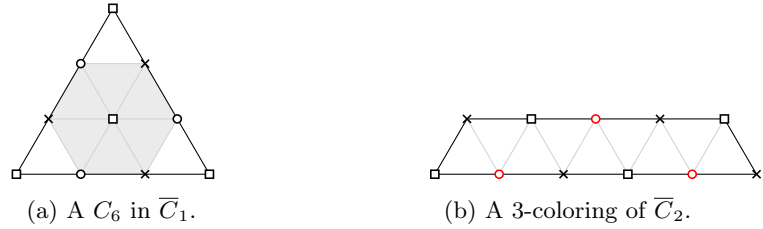

		\centering
		\begin{subfigure}[t]{.2\textwidth}
			\centering
			\includegraphics[page=14]{convex_unfoldable}
			\caption{A \hex in $\overline{C}_1$.}
			\label{fig:notC1}
		\end{subfigure}\hfil
		\begin{subfigure}[t]{.25\textwidth}
			\centering
			\includegraphics[page=12]{convex_unfoldable}
			\caption{A 3-coloring of $\overline{C}_2$.}
			\label{fig:notC2}
		\end{subfigure}
		\caption{Illustration for the proof that  $\overline{C}_1$  and $\overline{C}_2$ do not fold into $\Oct$.}
		\label{fig:unfoldable2}
	\end{figure}

	%%%%%%%%%%%%%%%%%%%%%%%%%%%%
	\begin{claim}
		The polyiamond $\overline{C}_2$ does not fold into \Oct.
	\end{claim}
	%%%%%%%%%%%%%%%%%%%%%%%%%%%%
	For the purpose of a contradiction, we assume that $\overline{C}_2$ does fold into \Oct. We consider a 3-coloring as illustrated in \cref{fig:notC2} and use the fact that each color class is mapped to a pair of antipodal pairs by \cref{lem:colorclasses}. Note that there exist only three circle vertices, each of which is adjacent to three faces of $\overline{C}_2$. Hence, one (of the two antipodal) corner of \Oct is covered by only one circle vertex implying that not all of its incident faces are covered.
	%%%%%%%%%%%%%%%%%%%%%%%%%%%%

	%%%%%%%%%%%%%%%%%%%%%%%%%%%%
	\begin{claim}\label{clm:unfoldability}
		The polyiamond $\overline{C}_3$ does not fold into \Oct.
	\end{claim}
	%%%%%%%%%%%%%%%%%%%%%%%%%%%%
	% new version
	The  polyiamond $\overline{C}_3$ can be viewed as copies of \hex and  \dhex overlapping in two triangles, see \cref{fig:unfC3a}. For the purpose of a contradiction,  we consider a 3-coloring of $\overline{C}_3$ as illustrated in \cref{fig:unfC3c}. 
	Note that there are four square vertices in total; we denote them by $v_1,v_2,v_3,v_4$.
	Moreover, the three leftmost square vertices cannot all map to the same corner $c$; otherwise the left \hex maps to at most two faces and $\overline{C}_3$ covers at most $2+6-1=7$ faces.
	We distinguish two cases.

	\begin{figure}[b]
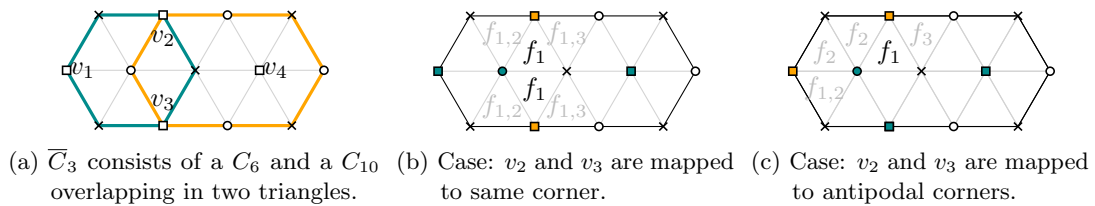

		\centering
		\begin{subfigure}[b]{.33\textwidth}
			\centering
			\includegraphics[page=7]{convex_unfoldable}
			\caption{$\overline{C}_3$ consists of a \hex and a \dhex overlapping in two triangles.}
			\label{fig:unfC3a}
		\end{subfigure}\hfil
		\begin{subfigure}[b]{.3\textwidth}
			\centering
			\includegraphics[page=9]{convex_unfoldable}
			\caption{Case: $v_2$ and $v_3$ are mapped to same corner.}
			\label{fig:unfC3c}
		\end{subfigure}\hfil
		\begin{subfigure}[b]{.3\textwidth}
			\centering
			\includegraphics[page=8]{convex_unfoldable}
			\caption{Case: $v_2$ and $v_3$ are mapped to antipodal corners.}
			\label{fig:unfC3b}
		\end{subfigure}
		\caption{Illustration for the non-foldability of $\overline{C}_3$.}
		\label{fig:unfC3}
	\end{figure}

	If $v_2$ and $v_3$ are mapped to $c$ (and $v_1$ to the antipodal corner $\overline c$), then all of their six incident triangles contain one of two neighboring edges of $c$; for an illustration consider  \cref{fig:unfC3c}. Hence, they cover at most three faces incident to $c$. 
	Moreover, $v_4$ must map to $\overline c$; otherwise the two triangles incident to $v_1$ are the only ones mapping to any of the four faces incident to $\overline c$.
	Consequently, all remaining triangles map to a face incident to $\bar c$ and thus, they are not able to cover the remaining face incident to $c$.
%	A contradiction to the foldability of~$\overline{C}_3$.
A contradiction.

It remains to consider the case that $v_2$ and $v_3$ are mapped to two antipodal corners $c$ and $\overline c$, respectively.
We may assume without loss of generality that $v_1$ is mapped to the corner~$c$  as illustrated in \cref{fig:unfC3b}. Then $v_4$ is mapped to the antipodal $\overline c$; otherwise not all faces of $\overline c$ are covered. Consequently, all triangles incident to $c$ are incident to $v_1$ and $v_2$. However,  four (of the five) triangles incident to $v_1$ and $v_2$ share one edge of \Oct. Hence the five triangles of $v_1$ and $v_2$ cover at most three faces incident to $c$. A contradiction to the foldability of~$\overline{C}_3$.
	
	%%%%%%%%%%%%%%%%%%%%%%%%%%%%

	%%%%%%%%%%%%%%%%%%%%%%%%%%%%
	\begin{claim}\label{prop:unfoldable12}
		The polyiamond $\overline{C}_4$ does not fold into \Oct.
	\end{claim}
	%%%%%%%%%%%%%%%%%%%%%%%%%%%%
	The  polyiamond $\overline{C}_4$ consists of a \dhex and a \hex overlapping in three triangles as illustrated in \cref{fig:unfC4a}. If their intersection is mapped to three different faces of \Oct, then by \cref{lem:hex}(\cref{prop:hex}) and (\cref{prop:dhex}), $\overline{C}_4$ covers at most $4+6-3=7$ faces of \Oct. Consequently, in every folding of  $\overline{C}_4$ into \Oct, the triangles in the considered intersection map to at most two distinct faces. 
	%	Note that due to its rotational symmetry, this holds for all three rotations of \dhex.
	In the following, we focus on the four central triangles of $\overline{C}_4$. By the above observation and the rotational symmetry of $\overline{C}_4$, the triangles of each `line' are mapped to at most two distinct faces.
	We distinguish two cases.

	\begin{figure}[htb]
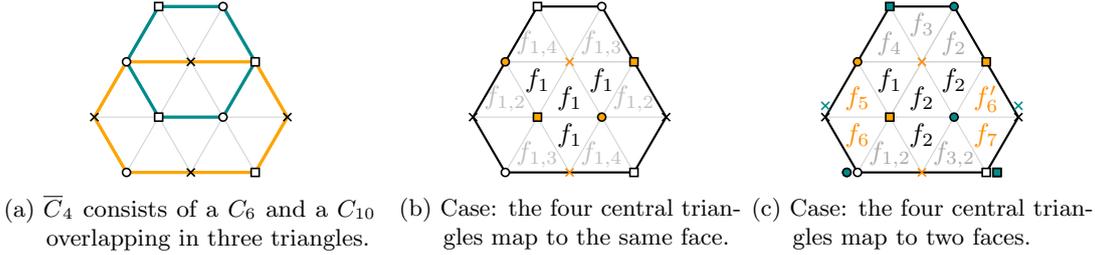

		\centering
		\begin{subfigure}[t]{.33\textwidth}
			\centering
			\includegraphics[page=3]{convex_unfoldable}
			\caption{$\overline{C}_4$ consists of a \hex and a \dhex overlapping in three triangles.}
			\label{fig:unfC4a}
		\end{subfigure}
	\hfil
		\begin{subfigure}[t]{.3\textwidth}
			\centering
			\includegraphics[page=4]{convex_unfoldable}
			\caption{Case: the four central triangles map to the same face.}
			\label{fig:unfC4b}
		\end{subfigure}\hfil
		\begin{subfigure}[t]{.3\textwidth}
			\centering
			\includegraphics[page=5]{convex_unfoldable}
			\caption{Case: the four central triangles map to two faces.}
			\label{fig:unfC4c}
		\end{subfigure}
		\caption{Illustration for the non-foldability of $\overline{C}_4$.}
		\label{fig:unfC4}
	\end{figure} 

	 If all four triangles map to the same face, denoted by $f_1$, then consider \cref{fig:unfC4b}. By their common edge incident to $f_1$, each of the two triangles with label~$f_{1,i}$ \cref{fig:unfC4b}, $i\in\{2,3,4\}$, cover at most one face different from $f_1$. Consequently,  at most seven faces can be covered in total and this case does not yield a folding of $\overline{C}_4$ into~\Oct.
	
	If the four central triangles map to two different faces, then by \cref{lem:colorclasses}, the map is as illustrated in \cref{fig:unfC4c}. 
	By \cref{lem:hex}(\cref{prop:hex}) and (\cref{prop:dhex}),  the copy of \hex covers at most four faces and the copy of \dhex covers at most six faces. Subtracting the double count of the intersection, the triangles of $\overline{C}_4$ cover at most $4+6-2=8$ faces. 
	Hence, by \cref{lem:hex}(\cref{prop:hex2}), the top copy of \hex covers exactly four faces and is consistent with the triangle-face-map of \cref{fig:C6b}.
	Note that the two triangles with label $f_{i,2}$, $i\in\{1,3\}$, in \cref{fig:unfC4c} contain the common edge of $f_i$ and $f_2$ and thus they may not cover new faces of \Oct. It follows that the remaining four triangles cover distinct and new faces of \Oct. However, this implies that  triangles $f_6$ and $f_6'$ are mapped to the same face. A contradiction. Hence,  $\overline{C}_4$ does not fold into~\Oct.
	%%%%%%%%%%%%%%%%%%%%%%%%%%%%
\end{proof}

% !TEX root = main
\section{A Sharp Size Bound}\label{sec:4}
As shown in \cref{clm:unfoldability}, 
the polyiamond~$\overline{C}_3$
 is not foldable, i.e.,
there exist  polyiamonds of size 14 that do not fold into \Oct. 
In this section, we show the following complementing theorem.

\begin{restatable}{theorem}{thmSize}\label{thm:size}
	Every polyiamond $P$ of size $\geq15$ folds into~\Oct. 
\end{restatable}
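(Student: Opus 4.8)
\textbf{Proof proposal for \cref{thm:size}.}

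The plan is to reduce the problem to a finite case analysis by exploiting the tools already established, in particular \cref{lem:containingConvex} and \cref{lem:nets}, together with the convex characterization \cref{thm:convex}. By \cref{lem:containingConvex}, it suffices to show that every polyiamond $P$ of size $\geq 15$ \Tcontains a convex foldable polyiamond; and by \cref{thm:convex}, a convex polyiamond is foldable precisely when it contains one of the five members of \C. So the core goal is: every sufficiently large polyiamond can be zig-zag-reduced to a convex subpolyiamond that contains an element of \C (equivalently, \Tcontains an octahedron net, which by \cref{lem:nets} is also sufficient). The main structural fact I would lean on is that the inclusion-wise maximal \Cfree convex polyiamonds are exactly the four polyiamonds $\overline{C}_1=o$, $\overline{C}_2=w$, $\overline{C}_3=s$, $\overline{C}_4=p$ from the construction in \cref{fig:convex_foldable_constructionTREE}, and each of these has bounded size.

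First I would record the sizes of the maximal \Cfree convex polyiamonds and note that the largest among them, $\overline{C}_3=s$, has size $14$. The intended statement is then that any convex polyiamond of size $\geq 15$ must contain a member of \C: since the construction tree enumerates \emph{all} \Cfree convex polyiamonds and every such polyiamond is contained in one of $\overline{C}_1,\dots,\overline{C}_4$, and all four of these have size at most $14$, no \Cfree convex polyiamond can have size exceeding $14$. Hence every convex polyiamond of size $\geq 15$ contains one of the five foldable polyiamonds in \C and is therefore foldable by \cref{thm:convex}. This disposes of the convex case immediately.

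The substantive step is the non-convex case. Here I would take an arbitrary (possibly non-convex, possibly holed) polyiamond $P$ with $|P|\geq 15$ and argue that it can be reduced by zig-zag-folding to a convex subpolyiamond $C$ of size $\geq 15$, whence $C$ contains a member of \C and \cref{lem:containingConvex} finishes the proof. The delicate point is guaranteeing that the convex reduction does not destroy too many triangles: zig-zag-folding collapses material outside the chosen convex core, and a priori the largest convex subpolyiamond of $P$ could be much smaller than $|P|$. I would therefore argue by the reduction procedure already used in \cref{fig:convexReduction}: starting from $P$ one obtains \emph{some} contained convex polyiamond $C$, and the claim is that for $|P|\geq 15$ one can always arrange $|C|\geq 15$ (or, more directly, that $C$ can be chosen to contain a member of \C). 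Polyiamonds with a hole of positive area are already handled by \cref{thm:Holes} (all but $O$, which has size below the threshold), so the remaining obstruction is thin, tree-like, or slit-heavy shapes whose maximal convex core is small. Sealing slit edges only helps (as remarked at the start of \cref{sec:2new}), so I may assume $P$ has no slit edges.

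I expect the main obstacle to be exactly this convex-core bound for non-convex shapes: one must rule out large polyiamonds all of whose convex subpolyiamonds are \Cfree. Since every \Cfree convex polyiamond has size $\leq 14$, the task is to show that a polyiamond whose every convex subpolyiamond has size $\leq 14$ is itself small — intuitively, a large polyiamond must contain a large convex piece. I would make this precise by a direct analysis of how triangles accumulate around a single interior vertex or along a strip: a \hex (six triangles around a vertex) or a short triangular strip already forces one of the \C-patterns once enough triangles are present, and I would bound the total number of triangles that can avoid creating any \C-pattern after the convex closure. This counting—showing the extremal \Cfree configuration has exactly size $14$ and coincides with $\overline{C}_3$—is where the real work lies, and it is what pins the bound at the sharp value $15$.
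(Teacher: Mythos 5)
Your handling of the convex case is correct (it is exactly \cref{thm:convex} together with the fact that the maximal \Cfree convex polyiamonds have size at most 14), but the reduction you propose for the general case rests on a false claim. It is not true that every polyiamond of size $\geq 15$ \Tcontains a convex foldable polyiamond, and your guiding intuition that ``a large polyiamond must contain a large convex piece'' fails badly. Consider a \emph{staircase}: a zigzag path of triangles that alternately steps right and up. Any three consecutive triangles of a staircase already form a non-convex set, so its largest convex subpolyiamond is a rhombus of two triangles --- whereas any polyiamond folding into \Oct must have at least eight triangles to cover all faces. Hence a staircase of size 15 \Tcontains \emph{no} convex foldable polyiamond, yet by the theorem it does fold into \Oct; concretely, the closed band of eight faces around an equator of \Oct unrolls to exactly such a zigzag, so a staircase folds by wrapping around \Oct rather than by collapsing onto a convex core. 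This shows that \cref{lem:containingConvex} alone cannot carry the non-convex case. Your auxiliary device of passing to the ``convex closure'' is likewise illegitimate: $P$ need not contain the convex closure of a subpolyiamond, and by \cref{obs:containment} foldability is not even monotone under adding material, so ``enough triangles present'' never automatically forces a \C-pattern inside $P$.

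The paper's proof is structured around precisely this difficulty. It introduces four \emph{non-convex} seed polyiamonds \Px, \Pu, \Pz, \Pl (subsets of the neighborhood of a \dhex) and proves in \cref{prop:containingP}, by case analysis on where the remaining triangles lie and with explicit triangle-face maps, that \Tcontain{ing} one of them together with size $\geq 15$ guarantees foldability; the convex strip \Pstrich is handled separately via the width argument of \cref{lem:width}, whose folding maps $P$ \emph{onto} \Pstrich rather than finding \Pstrich inside $P$ --- again a mechanism outside your framework. Finally --- and this is the step your sketch has no substitute for --- a computer-aided bottom-up enumeration (\cref{table:1}) shows that no polyiamond of size $\geq 14$ avoids \Tcontain{ing} all five seeds, which completes the proof. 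Your proposed hand-count targets the false convex-containment statement, so it cannot be repaired locally: one genuinely needs folding strategies for thin, wrap-around shapes (the staircases above being the prototypical example) plus some finite exhaustion to certify that the seed list is complete.
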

To present an idea of the proof, we give some useful sufficient conditions and a simple upper bound.
Let $P$ be a polyiamond and $\ell$ some grid line. The \emph{$\ell$-width} of $P$ denotes the size of the polyiamond obtained by folding all edges parallel to  $\ell$ in a zig-zag-manner as indicated in \cref{fig:convexReduction}. 
The \emph{width} of $P$ is the maximum of the three different $\ell$-widths.  Because the convex  polyiamond $\Pstrich:=z$, depicted in \cref{fig:convex_foldable_constructionTREE}, folds into \Oct, we obtain the following.

\begin{lem}\label{lem:width}
	Every polyiamond $P$ of width at least 10 folds into \Oct.
\end{lem}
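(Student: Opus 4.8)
The plan is to connect the notion of \emph{width} to \Tcontainment of the convex polyiamond $\Pstrich:=z$, which we are told folds into \Oct. The key observation is that the width of $P$ is defined precisely as the maximum size achievable after zig-zag-folding along one of the three grid directions, and that \cref{lem:containingConvex} already gives us foldability whenever $P$ \Tcontains a convex foldable polyiamond. So the entire burden of the proof reduces to a purely combinatorial/geometric statement about polyiamonds in the triangular grid: \emph{if the width of $P$ is at least $10$, then after zig-zag-folding in the appropriate direction, the resulting polyiamond \Tcontains $\Pstrich$ (or can itself be further reduced to $\Pstrich$).}

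Concretely, first I would fix the direction $\ell$ achieving the maximum, so that the $\ell$-width of $P$ is at least $10$. After the zig-zag-fold collapsing everything parallel to $\ell$, the triangles stack up into a region that lives between two fixed grid lines parallel to $\ell$ — essentially a ``strip'' of height bounded in the $\ell$-direction but containing at least $10$ triangles projected onto a single band. The goal is to argue that such a dense band must contain the shape $\Pstrich$ up to the translations/rotations/reflections permitted by \Tcontainment. I would examine the explicit shape of $\Pstrich=z$ from \cref{fig:convex_foldable_constructionTREE} (a convex polyiamond of size $10$) and check that any band of $10$ triangles aligned along a grid direction, being convex after folding, is forced to realize $z$ as a subset of its triangles. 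Since the width is the max over the three directions, choosing the right $\ell$ guarantees enough triangles in one band to force the containment.

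The main obstacle will be the case analysis in showing that a width-$10$ configuration genuinely \Tcontains $\Pstrich$ rather than merely having size $10$ in some scattered arrangement. A polyiamond of $\ell$-width $10$ need not be a single contiguous run of triangles along a line; after zig-zag-folding the edges parallel to $\ell$, the resulting object could in principle be an irregular shape, and I must verify that its shape is constrained enough (convex along $\ell$, sitting in a strip) to contain the specific convex piece $z$. The cleanest route is probably to observe that zig-zag-folding along $\ell$ produces a polyiamond whose triangles all lie in a minimal strip between two consecutive grid lines parallel to $\ell$, that this reduced polyiamond is convex in the $\ell$-direction, and that a convex polyiamond of size $\geq 10$ confined to such a strip necessarily contains $z$. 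Once that geometric containment is established, invoking \cref{lem:containingConvex} with $C=\Pstrich$ finishes the argument immediately.

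I would therefore structure the proof as: (1) pick $\ell$ realizing the width $\geq 10$; (2) perform the zig-zag-fold to obtain a reduced polyiamond $P_\ell$ of size equal to the $\ell$-width, lying in a thin strip; (3) argue geometrically that $P_\ell$ \Tcontains $\Pstrich$; and (4) conclude via \cref{lem:containingConvex} (using that $P_\ell$ is reached from $P$ by legal folds, and $\Pstrich$ is convex and foldable) that $P$ folds into \Oct. Step (3) is the crux and is where I expect to spend the care, handling the few possible band shapes explicitly.
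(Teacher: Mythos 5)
Your proposal is correct and takes essentially the same route as the paper's (two-line) proof: zig-zag-fold $P$ along the grid direction realizing the width to obtain a one-row polyiamond of size $\geq 10$ that \Tcontains $\Pstrich$, then conclude via the foldability of $\Pstrich$ (\cref{thm:convex} together with \cref{lem:containingConvex}). The crux you flag in step (3) is in fact immediate: folding preserves connectivity, so the reduced polyiamond is a contiguous (hence convex) run of triangles in a single row, and any such run of size $\geq 10$ contains $\Pstrich$.
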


\begin{proof}
	Because $P$ has width 10, it can be folded into the polyiamond $\Pstrich$ by zig-zag-folds. Then, by  \cref{thm:convex},  $\Pstrich$ can be folded into $\Oct$.
\end{proof}

Moreover, we determine an upper bound on the size of polyiamonds of width $\leq 9$. In particular, they have size~$\leq 42$ which yields a nice and simple upper bound.

\begin{restatable}{corollary}{corEasyUpperBound}\label{cor:EasyUpperBound}
	Every polyiamond of size $> 42$ folds into~\Oct.
\end{restatable}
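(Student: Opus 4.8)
The plan is to combine \cref{lem:width} with a bound on the size of polyiamonds whose width is small. By \cref{lem:width}, any polyiamond of width at least $10$ already folds into \Oct, so the only remaining cases are polyiamonds of width at most $9$. The corollary then follows immediately provided I can show that a polyiamond of width at most $9$ has size at most $42$: for then every polyiamond of size $> 42$ must have width at least $10$, and hence folds into \Oct by \cref{lem:width}.

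First I would make precise the relationship between the three $\ell$-widths and the size. Recall that the $\ell$-width of $P$ is the size of the polyiamond obtained by zig-zag-folding all edges parallel to $\ell$; geometrically, this counts the triangles that survive after collapsing $P$ in the direction transverse to $\ell$, i.e.\ it measures the ``extent'' of $P$ along one of the three grid directions. The key combinatorial fact I would establish is that the total number of triangles of a polyiamond is controlled by the product (or an appropriate combination) of its extents in the three grid directions. Concretely, I would argue that a polyiamond living within a bounded region of the triangular grid — bounded in all three directions by the width constraint — can contain only a limited number of triangles.

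The main step is therefore a counting argument: if each of the three $\ell$-widths is at most $9$, then the supporting triangular grid region that can possibly contain $P$ is bounded, and one counts the maximum number of unit triangles fitting in such a region. I would set up coordinates on the triangular grid (for instance, the three families of grid lines indexed by integers, with each triangle determined by the three half-plane coordinates it occupies), express the width-$\leq 9$ condition as a bound on the spread of each coordinate, and then count the triangles in the largest such region. The target is to show this count never exceeds $42$, matching the stated bound. The hardest part will be getting the counting tight rather than merely finite: a crude box bound would give something far larger than $42$, so I expect the real work is to exploit the hexagonal symmetry of the triangular grid and the fact that a bound on all three directions simultaneously forces a hexagon-like (rather than a long-and-thin) region, whose triangle count is governed by the smallest achievable extremal configuration.

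Finally, once the size bound of $42$ is established for width $\leq 9$, I would assemble the corollary: a polyiamond of size $> 42$ cannot have width $\leq 9$, so its width is at least $10$, and \cref{lem:width} finishes the argument. I would note that this bound is deliberately loose — it is offered as a ``nice and simple'' intermediate estimate en route to the sharp bound of \cref{thm:size} — so tightness of the counting beyond reaching $42$ is not required; what matters is a clean, verifiable extremal count for the three-directional width constraint.
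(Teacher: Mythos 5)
Your overall route is exactly the paper's: argue by contraposition, invoke \cref{lem:width} to conclude that a non-foldable polyiamond has all three $\ell$-widths at most $9$, and then bound the size of any such polyiamond by $42$. So the skeleton is correct.

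The gap is that the one step carrying all the quantitative content --- \emph{a polyiamond whose three $\ell$-widths are all at most $9$ has at most $42$ triangles} --- is announced as a target but never actually established; you explicitly defer it as ``the hardest part.'' The paper closes this step not by a symmetric counting formula but by a finite geometric enumeration: a width-$9$ constraint in one direction confines $P$ to an infinite strip, two such strips in different directions can overlap in exactly two combinatorially distinct ways (because of the half-triangle offsets of the grid), and intersecting with the finitely many inequivalent translations of the third strip yields nine candidate regions, six of which are pairwise different (\cref{fig_regionsNEW}). The largest of these regions has exactly $42$ triangles (\cref{fig:42}), and every polyiamond of width at most $9$ is contained in one of them. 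Note in particular that the extremal region is not the most symmetric hexagon, so an appeal to hexagonal symmetry alone will not produce the number $42$; one really does need to enumerate the relative offsets of the three strips. With that enumeration supplied, your argument is complete and coincides with the paper's proof.
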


\begin{proof}
	Let $P$ be a polyiamond that does not fold into~\Oct.
	Then, by \cref{lem:width}, $P$ has width~$\leq 9$.
	Consequently,  $P$ is contained in the intersection of three strips of width~9 with different rotation. As illustrated in \cref{fig:strip2A,fig:strip2B},
\begin{figure}[bh]
	\centering
	\begin{subfigure}[t]{.47\textwidth}
		\centering
		\includegraphics[page=3,scale=.5]{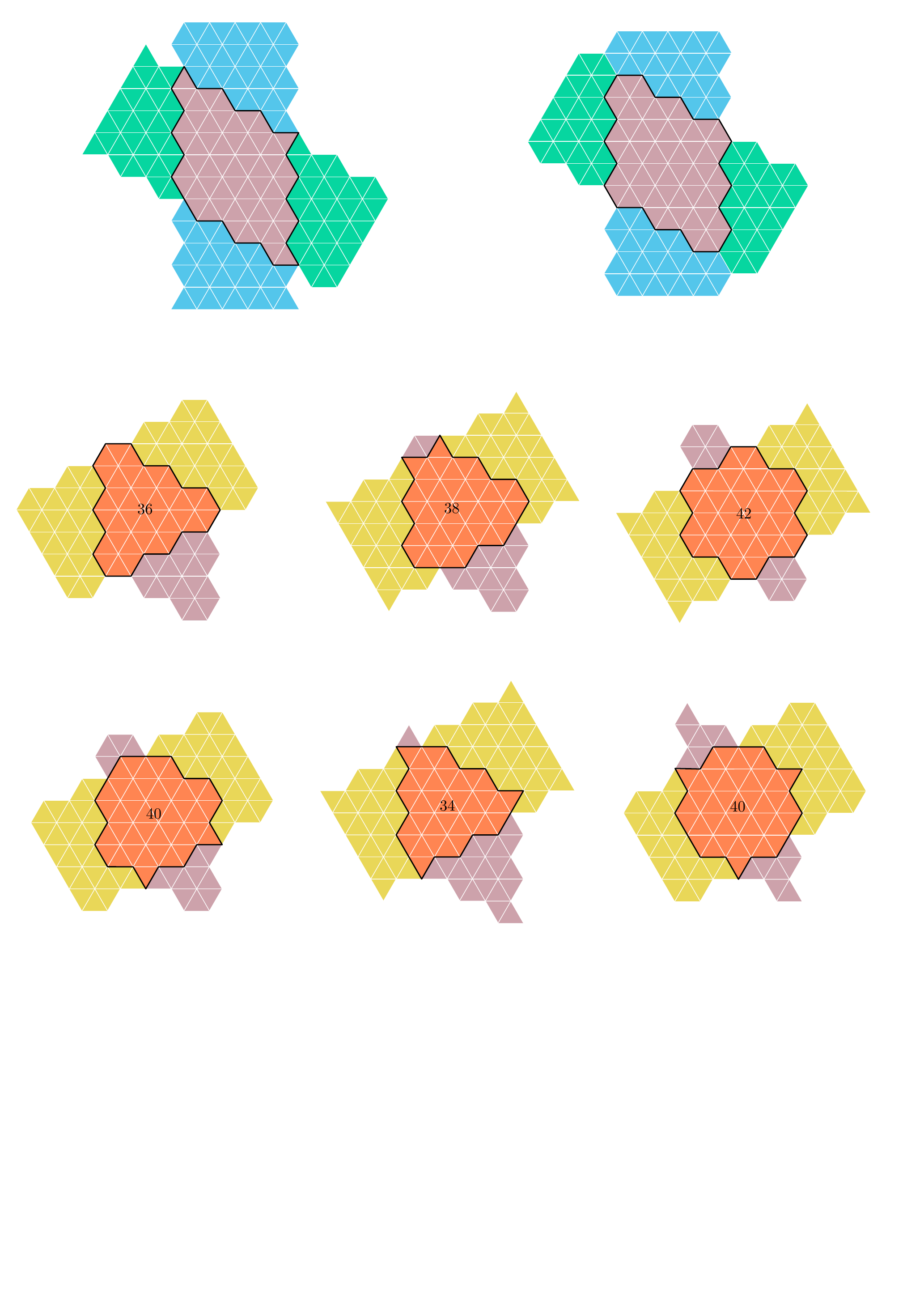}
		\caption{}
		\label{fig:strip2A}
	\end{subfigure}\hfill
	\begin{subfigure}[t]{.47\textwidth}
		\centering
		\includegraphics[page=4,scale=.5]{areas}
		\caption{}
		\label{fig:strip2B}
	\end{subfigure}\hfil
	
	\begin{subfigure}[t]{.2\textwidth}
		\centering
		\includegraphics[page=6,scale=.5]{areas}
		\caption{}
		%		\label{fig:}
	\end{subfigure}\hfill
	\begin{subfigure}[t]{.2\textwidth}
		\centering
		\includegraphics[page=7,scale=.6]{areas}
		\caption{}
		%		\label{fig:}
	\end{subfigure}\hfill
	\begin{subfigure}[t]{.2\textwidth}
		\centering
		\includegraphics[page=8,scale=.6]{areas}
		\caption{}
		%		\label{fig:}
	\end{subfigure}\hfill
	\begin{subfigure}[t]{.2\textwidth}
		\centering
		\includegraphics[page=9,scale=.5]{areas}
		\caption{}
		%		\label{fig:}
	\end{subfigure}\hfill
	\begin{subfigure}[t]{.2\textwidth}
		\centering
		\includegraphics[page=10,scale=.5]{areas}
		\caption{}
		%		\label{fig:}
	\end{subfigure}\hfill
	\begin{subfigure}[t]{.2\textwidth}
		\centering
		\includegraphics[page=12,scale=.5]{areas}
		\caption{}
		%		\label{fig:}
	\end{subfigure}\hfill
	\begin{subfigure}[t]{.2\textwidth}
		\centering
		\includegraphics[page=13,scale=.5]{areas}
		\caption{}
		%		\label{fig:}
	\end{subfigure}\hfill
	\begin{subfigure}[t]{.2\textwidth}
		\centering
		\includegraphics[page=14,scale=.5]{areas}
		\caption{}
		%		\label{fig:}
	\end{subfigure}\hfill
	\begin{subfigure}[t]{.2\textwidth}
		\centering
		\includegraphics[page=15,scale=.5]{areas}
		\caption{}
		\label{fig:42}
	\end{subfigure}
	\caption{Illustration for the proof of \cref{cor:EasyUpperBound}. Construction of the maximal polyiamonds of width $\leq 9$; their sizes are indicated by numbers.}
	\label{fig_regionsNEW}
\end{figure}
 two of these infinite strips may intersect in two distinct ways. The intersection with \new{different translations of}  a third strip results in nine polyiamonds (six of which are pairwise different), see \cref{fig_regionsNEW}.

The largest of these polyiamonds has size~42 and is depicted in \cref{fig:42}. Because $P$ is contained in one of them, $P$ %(and every other unfoldable polyiamond) 
has size at most 42. Consequently, any larger polyiamond is foldable.
\end{proof}

To show the sharp bound, we need to work a little harder.
In particular, the proof is computer-aided.

\paragraph{Proof of the sharp upper bound}
\cref{thm:size} is based on a strong sufficient criterion. 
Let \Px, \Pu,  \Pz, and \Pl denote the polyiamonds depicted in \cref{fig:Px,fig:Pu,fig:Pz,fig:Pl}, respectively. 
In a first step, we show that polyiamonds that are large enough and do contain one of the four polyiamonds fold into $\Oct$.

\begin{figure}[htb]
	\centering
	
	\begin{subfigure}[t]{.2\textwidth}
		\centering
		\includegraphics[page=19,scale=\scaleL]{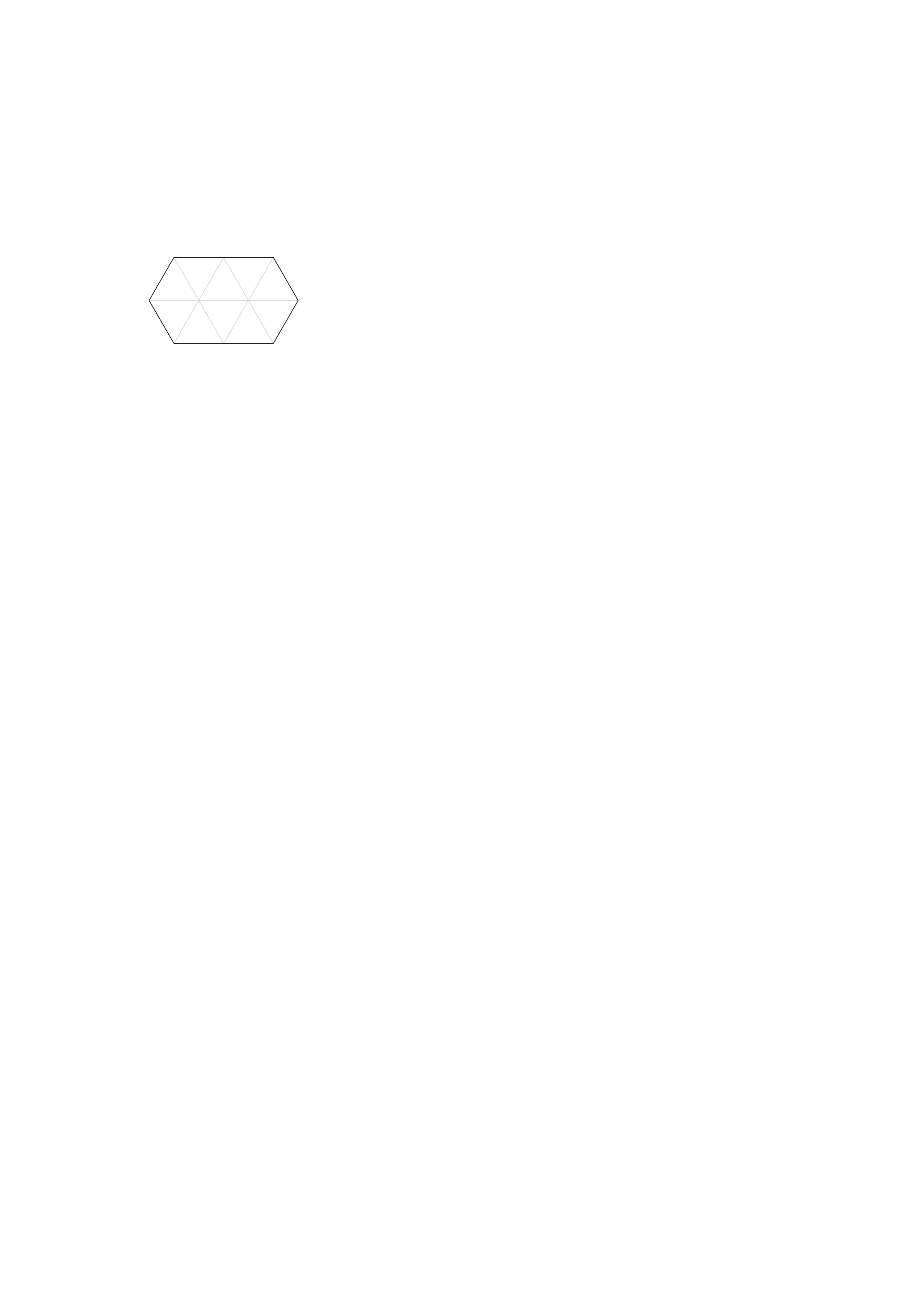}
		\caption{\Px}
		\label{fig:Px}
	\end{subfigure}\hfill
	\begin{subfigure}[t]{.2\textwidth}
		\centering
		\includegraphics[page=24,scale=\scaleL]{C10}
		\caption{\Pu}
		\label{fig:Pu}
	\end{subfigure}\hfill
	\begin{subfigure}[t]{.2\textwidth}
		\centering
		\includegraphics[page=26,scale=\scaleL]{C10}
		\caption{\Pz}
		\label{fig:Pz}
	\end{subfigure}\hfill
	\begin{subfigure}[t]{.2\textwidth}
		\centering
		\includegraphics[page=25,scale=\scaleL]{C10}
		\caption{\Pl}
		\label{fig:Pl}
	\end{subfigure}
	\caption{Illustration of the four polyiamonds used in \cref{prop:containingP}.}
	\label{fig:suffienct}
\end{figure}

\begin{lem}\label{prop:containingP}
	Every polyiamond $P$ that \Tcontains \Px, \Pu, \Pz, or \Pl and has size~$\geq 15$ folds into~\Oct.
\end{lem}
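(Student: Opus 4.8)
The plan is to show that each of the four polyiamonds \Px, \Pu, \Pz, and \Pl functions as a ``seed'' which, together with enough surrounding material (size $\geq 15$), can be folded into the octahedron. Since \cref{lem:containingConvex} and \cref{lem:nets} already give sufficient criteria based on $\triangle$-containment of a convex foldable polyiamond or of a net, the natural first move is to reduce the size-$15$ polyiamond by zig-zag-folding its excess triangles onto a manageable core built around the seed. The key point is that each seed, while not by itself covering all eight faces, leaves only one or two faces uncovered, so that the extra triangles guaranteed by the size bound can be directed to cover the missing faces.

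\textbf{Key steps.} First I would fix, for each of \Px, \Pu, \Pz, \Pl, an explicit partial folding (triangle-face-map) realizing as many faces as possible; by \cref{lem:colorclasses} these maps are constrained by the $3$-coloring, so I would record which faces remain uncovered in each case. Second, I would argue that since $P$ has size $\geq 15$ but each seed has size strictly less than $15$, there is at least one additional triangle attached to the seed; the goal is to show this triangle (or the chain of triangles leading to it, zig-zag-folded via \cref{lem:containingConvex}) can always be routed to cover a still-uncovered face. Third, I would do a case analysis on where the extra material attaches to each seed: by the rotational/reflective symmetries of the seeds, the number of genuinely distinct attachment positions is small, and for each position I would exhibit (by a labeled figure, in the style of \cref{fig:convex_foldable}) a folding that covers all eight faces. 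Where a single extra triangle does not suffice, I would invoke the stronger fact that a size-$15$ polyiamond containing a size-$\leq 12$ seed must carry at least three extra triangles, giving enough freedom.

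\textbf{Main obstacle.} The hard part will be the uniformity of the argument across the four seeds and across all attachment positions: a careless case split could blow up, and one must ensure that an extra triangle destined to cover a missing face is not ``trapped'' — i.e. that after zig-zag-reducing the rest of $P$, the triangle still lies adjacent to the seed in a way compatible with folding it onto the required face. I expect the cleanest route is to observe that each seed already $\triangle$-contains a configuration covering six or seven faces with a predictable pair of antipodal uncovered corners, so that any single neighboring triangle in the correct color class completes the cover; the genuine work is verifying that for \emph{every} placement of the remaining $\geq 3$ triangles (forced by size $\geq 15$), at least one lands in that color class adjacent to the seed. This is precisely the kind of finite but tedious check that the paper flags as computer-aided, so I would structure the proof to reduce it to a bounded enumeration and defer the exhaustive verification to the computer search.
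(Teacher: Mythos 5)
Your proposal follows the same skeleton as the paper's actual proof --- fix a partial triangle-face-map for each seed, then case-analyze where the excess material guaranteed by size $\geq 15$ sits, reducing by zig-zag folds --- but the step you designate as the ``cleanest route'' is broken, and the concluding deferral to the computer does not repair it. The paper's maps for \Px, \Pu, and \Pz leave \emph{two} faces uncovered ($f_7$ and $f_8$); only \Pl gets to seven faces. With two faces missing, a single extra triangle, however well placed, can never complete the cover, so a criterion of the form ``one neighboring triangle in the correct color class suffices'' fails arithmetically for three of the four seeds. What is actually needed, and what the paper does, is to place (at least) two extra triangles onto two \emph{distinct} missing faces, and whether that is possible depends jointly on their positions: the paper tracks which ``flaps'' of the surrounding \dhex-frame contain material, reflects the seed's triangle-face-map to match where the material lies, and uses adjustable (``joker'') labels such as $f_{7,8}$ and $f_{5,6}$. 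Your fallback --- ``at least three extra triangles, giving enough freedom'' --- names the resource but not the argument; the paper's claims for \Pu, \Pz, \Pl show this freedom must be converted into coverage by a genuine case analysis (e.g.\ the subcase where all excess lies in the left and right flaps requires a different map than the generic case).

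The second gap is the reduction to ``a bounded enumeration.'' The excess triangles need not be anywhere near the seed: $P$ can be arbitrarily large, with the extra material a long chain far away, so enumerating ``all placements of the remaining $\geq 3$ triangles'' is not finite until you prove a reduction lemma folding the excess onto a bounded neighborhood of the seed. \cref{lem:containingConvex} does not give you this, because the seeds are not convex, and \cref{obs:containment} is exactly the warning that \Tcontain{ing} a foldable polyiamond is not by itself sufficient; the paper builds the reduction by hand around the \dhex-frame (fold everything onto the frame, record which flaps receive triangles). Moreover, your plan to defer the verification to ``the computer search'' misplaces where computation enters the paper: the search only generates \Pfree polyiamonds, i.e.\ it checks \Tcontain{ment} of five fixed patterns to establish the complementary half of \cref{thm:size}; it never checks foldability, and the present lemma is proved entirely by hand. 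Finally, note the ordering device your independent treatment of the seeds forgoes: the paper proves the claim for \Px first, then assumes in the claim for \Pu that $P$ does not \Tcontain \dhex, and in the claim for \Pz that $P$ \Tcontains neither \Px nor \Pu. These assumptions force many triangles \emph{outside} the frame and are what keep the case analysis small; without them, the number of attachment configurations you would have to handle grows substantially.
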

Before proving \cref{prop:containingP}, we show how to deduce \cref{thm:size}.
%%%%%%%%%%%%%%%%%%%%%%%%%%%%
\thmSize*
%%%%%%%%%%%%%%%%%%%%%%%%%%%%
\begin{proof}
	We call a polyiamond \emph{\Pfree} if it does not  \Tcontain any of the  polyiamonds \Pstrich, \Px, \Pu, \Pz, or \Pl. 
	%By \cref{thm:convex,lem:containingPx,lem:containingPu,lem:containingPz,lem:containingPl},
	By \cref{thm:convex,prop:containingP}, it remains to show that no \Pfree polyiamond of size $\geq15$ exists. To do so, we construct all \Pfree polyiamonds bottom-up and show that indeed there exists no such polyiamond. The construction is similar to the one in the proof of \cref{thm:convex}: We start with the polyiamond of size 1. Then, we enlarge every \Pfree polyiamond of size $k$ by individual triangles and check if the resulting polyiamonds remain \Pfree. In this way, we obtain a list of \Pfree polyiamonds of size $k+1$.
	\cref{table:1} 
	presents the numbers $p(n)$ of \Pfree  polyiamonds with  size $n$; these numbers have been generated by computer-search. The code is available at \url{https://github.com/dasnessie/folding-polyiamonds/}.
\end{proof}
\begin{table}[hb]
	\centering
	\caption{The number $p(n)$ of \Pfree polyiamonds of size~$n$. }
	\label{table:1}
	\begin{tabular}{c| cc cccc c  cc cccc c}
		$n$    & 2&$3$ & 4 & 5 &  6 &  7 & 8  &  9 & 10 & 11 & 12 & 13  &  14& 15\\ \hline \vspace{6pt}
		$p(n)$ & 1 & 1        & 3 & 4 & 10 & 16 & 22 & 22 & 16 &  9  & 3 &  1  & 0  & 0 
	\end{tabular}
\end{table}

It remains to prove \cref{prop:containingP}. We split its proof into four claims. Together, \cref{lem:containingPx,lem:containingPu,lem:containingPz,lem:containingPl} imply \cref{prop:containingP}.

\begin{claim}\label{lem:containingPx}
	Every polyiamond $P$ that \Tcontains \Px and has size $\geq 15$ folds into~\Oct.
\end{claim}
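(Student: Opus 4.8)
The plan is to show that $\Px$ already contains (up to zig-zag reduction) enough structure to force covering seven of the eight faces, so that only one missing face needs to be supplied by the remaining triangles guaranteed by the size bound. First I would examine the polyiamond $\Px$ directly: by the construction in \cref{fig:Px}, I expect $\Px$ to \Tcontain an octahedron net on all but one face, or more precisely to admit a partial folding covering exactly seven of the eight faces of $\Oct$ (if it already covered all eight, it would itself be foldable and the size hypothesis would be unnecessary). Since $P$ \Tcontains $\Px$, I would first apply \cref{lem:containingConvex}-style zig-zag reduction to fold all triangles of $P$ lying outside a convenient convex region down onto $\Px$, reducing the problem to analyzing the fixed partial folding of $\Px$ together with whatever triangles survive.

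The heart of the argument is a counting step. A polyiamond of size $\geq 15$ that \Tcontains $\Px$ has at least $15 - |\Px|$ triangles beyond $\Px$; I would compute $|\Px|$ from \cref{fig:Px} and verify this leaves a guaranteed surplus of triangles adjacent to $\Px$. The key lemma to invoke is \cref{lem:colorclasses}: the single uncovered face of $\Oct$ is incident to three corners, each lying in one of the three color classes, and I would argue that any triangle attached to $\Px$ along an appropriate boundary edge, when folded by $\pm\beta$, maps onto that missing face or can be routed to it. Concretely, I would identify the boundary edges of $\Px$ across which an extra triangle would fold onto the uncovered face, and then show by a pigeonhole/case argument that with $15$ triangles total there must exist at least one extra triangle incident to such an edge (or reachable after a short zig-zag), so the eighth face gets covered and the triangle-face-map becomes surjective.

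The step I expect to be the main obstacle is handling the positional freedom: the surplus triangles of $P$ need not attach to $\Px$ exactly at the edges that cover the missing face, and a naive count does not immediately guarantee that \emph{some} extra triangle sits in a useful location. The hard part will be showing that no matter how the remaining $\geq 15 - |\Px|$ triangles are distributed around $\Px$, one can always fold them so that the eighth face is reached — this likely requires a small case analysis on which boundary region of $\Px$ the extra triangles occupy, using the fact (analogous to \cref{lem:hex} reasoning) that a triangle cannot be "wasted" too far from the target face when $\Px$ is already near-net. I would organize this by fixing the near-net folding of $\Px$, enumerating the few boundary positions where attached triangles fail to help, and bounding how many triangles can accumulate in those useless positions; the size threshold $15$ should be exactly calibrated so that this bound is exceeded, forcing a useful triangle to exist.
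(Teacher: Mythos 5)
There is a genuine gap, and it sits at the very foundation of your counting step. You assume that \Px admits a partial folding covering \emph{seven} of the eight faces, so that a single well-placed surplus triangle finishes the job. This is false: \Px is \Tcontain{ed} in \dhex (this is exactly how the paper uses it in the proof of \cref{lem:containingPu}), and any partial folding of \Px extends to a partial folding of a \dhex, which by \cref{lem:hex}(\cref{prop:dhex}) covers at most \emph{six} faces of \Oct. So \emph{two} faces are always missing, not one, and no pigeonhole argument that produces a single useful extra triangle can make the triangle-face-map surjective. This is precisely why the statement needs a more structured case analysis: the paper considers the \dhex-frame around \Px, groups the positions outside the frame into \emph{flaps}, and argues (i)~if surplus triangles lie in two distinct flaps, an explicit triangle-face-map sends them to the two missing faces $f_7$ and $f_8$ (\cref{fig:unfC10plus5B}); (ii)~if all surplus triangles lie in a single flap, then since a flap holds at most $4$ triangles and the frame holds $10$, the hypothesis $|P|\geq 15 > 14$ forces a triangle outside that flap, and explicit foldings (\cref{fig:unfC10plus5D,fig:unfC10plus5C}) cover both missing faces. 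Your remark that the threshold $15$ ``should be exactly calibrated'' is correct in spirit, but the calibration is $10+4=14$, i.e.\ it rules out confinement to one flap; it is not a bound on triangles sitting in ``useless'' positions relative to one missing face.

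A second, smaller problem: your reduction step (zig-zag-folding $P$ onto a convex region containing \Px before looking for useful triangles) can destroy exactly the surplus triangles you later need, since \Px is not convex and \cref{lem:containingConvex} only preserves a convex target. The paper instead folds away only the triangles \emph{not} lying in the relevant flaps, keeping the flap triangles alive so they can be routed to $f_7$ and $f_8$. So while your overall flavor (fixed near-complete folding of \Px plus a location-based case analysis on the surplus) resembles the paper's, the two missing ingredients --- that two faces rather than one must be supplied, and that the case analysis must be organized by flaps of the \dhex-frame rather than by adjacency to single boundary edges --- are the actual content of the proof, and your outline as written would not close.
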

\begin{proof}
	We consider the \dhex-frame containing \Px as illustrated in \cref{fig:unfC10plus5B} and call each connected group of rose triangles a \emph{flap} of \Px.
	We consider the following cases:
	
	If triangles exist in two distinct flaps, then there exists a triangle-face-map such that some triangles are mapped to (the two missing faces) $f_7$ and $f_8$, see \cref{fig:unfC10plus5B} (or its mirror image). First, we fold away all (but at most two) triangles that are not contained in the two flaps. The two corner triangles between two flaps may remain. Its foldability is implied by the fact that the polyiamond in \cref{fig:unfC10plus5B} folds into~\Oct.

	\begin{figure}[htb]
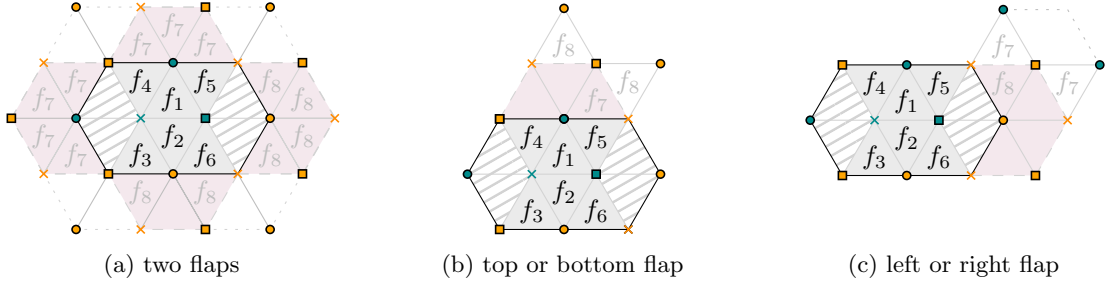

		\centering
		\begin{subfigure}[t]{.3\textwidth}
			\centering
			\includegraphics[page=13]{C10_scaled}
			\caption{two flaps}
			\label{fig:unfC10plus5B}
		\end{subfigure}\hfill
		\begin{subfigure}[t]{.3\textwidth}
			\centering		
			\includegraphics[page=15]{C10_scaled}
			\caption{top or bottom flap}
			\label{fig:unfC10plus5D}
		\end{subfigure}\hfill
		\begin{subfigure}[t]{.3\textwidth}
			\centering
			\includegraphics[page=14]{C10_scaled}
			\caption{left or right flap}
			\label{fig:unfC10plus5C}
		\end{subfigure}
		\caption{Illustration of the proof of \cref{lem:containingPx}.}
		\label{fig:unfC10plus5}
	\end{figure}
	
	It remains to consider the case that $P$ without $\dhex$ is attached via only one flap. Because $P$ has size at least 15 and each flap has size at most 4, there exists a triangle outside the flap. \cref{fig:unfC10plus5D,fig:unfC10plus5C} shows that this guarantees foldability in all cases.
\end{proof}

\begin{claim}\label{lem:containingPu}
	Every polyiamond $P$ that \Tcontains \Pu and has size $\geq 15$ folds into~\Oct.
\end{claim}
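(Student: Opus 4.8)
The plan is to mirror the structure of the proof of \cref{lem:containingPx}: embed \Pu inside a suitable \dhex-based frame, identify a small collection of ``flaps'' of triangles that can be attached, and argue that in each configuration one can zig-zag-fold away the irrelevant material to reach an explicitly foldable subpolyiamond. First I would fix a concrete triangle-face-map of the \dhex-portion of \Pu that already covers six faces (by \cref{lem:hex}(\cref{prop:dhex}) this is the most a \dhex can do), leaving exactly two faces, say $f_7$ and $f_8$, uncovered. The key observation to exploit is that \Pu has size at least $15$, while the frame itself is small, so there must be genuinely extra triangles, and the entire burden of the proof is to show these extra triangles can always be routed onto $f_7$ and $f_8$.

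Next I would classify where the extra triangles sit relative to \Pu. As in \cref{lem:containingPx}, I would partition the possible attachment positions into a constant number of cases according to which ``flap'' (connected group of triangles adjacent to a fixed side of the frame) is non-empty. For each case I would exhibit, via a figure analogous to \cref{fig:unfC10plus5}, an explicit folded state in which the flap triangles cover the two missing faces $f_7,f_8$; foldability of $P$ then follows because one first zig-zag-folds (as in \cref{fig:convexReduction}) all triangles not belonging to the frame plus the chosen flap, reducing $P$ to an explicitly foldable polyiamond. The size bound $\geq 15$ enters precisely to guarantee that at least one flap is non-empty and indeed contains enough triangles to reach both missing faces: since the frame and \Pu together have bounded size, the surplus triangles cannot all be folded uselessly away.

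The main obstacle I anticipate is the case analysis on \emph{which} flap carries the extra triangles and whether a single flap suffices or whether the surplus must be split across two flaps. Concretely, I expect a dichotomy like the one in \cref{lem:containingPx}: either extra triangles appear in two different flaps (covering $f_7$ and $f_8$ separately), or they pile up in a single flap, in which case I must check that one flap alone — folded appropriately — can still reach both missing faces. Verifying the single-flap subcase is the delicate part, because it requires the flap to be long enough (or bendable enough) to cover two faces simultaneously; here the size hypothesis $\geq 15$ is exactly what forces the flap to contain a triangle beyond the frame, and I would lean on a figure to certify that such a triangle always admits a fold onto the remaining face. Throughout, I would invoke \cref{lem:containingConvex} to reduce $P$ to the small foldable witness polyiamond once the correct flap configuration is fixed, so that no new foldability computation is needed beyond checking the finitely many witness figures.
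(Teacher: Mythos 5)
Your high-level plan does coincide with the paper's: place \Pu in its \dhex-frame, classify where the surplus triangles sit relative to the flaps, and certify each case by an explicit, figure-checked folding after zig-zag-reducing everything else via \cref{lem:containingConvex}. But there is a genuine gap, and it sits exactly in the case you call delicate -- which is not resolved the way you anticipate. You commit to one fixed triangle-face-map in which the frame covers six faces (leaving $f_7,f_8$), and your dichotomy asserts that surplus triangles lying in two distinct flaps can cover $f_7$ and $f_8$ separately, the only problematic situation being a single overloaded flap. For \Pu this is backwards: the paper's hard case is precisely when the surplus occupies \emph{two} flaps, namely the left and the right one, and there no reflection of the standard map (\cref{fig:something}) places both missing labels on occupied flaps. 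The paper's fix is not to show that one flap reaches both missing faces; it switches to a genuinely different triangle-face-map (\cref{fig:something2}) in which the \Pu-portion covers only five faces and the flaps must supply \emph{three} labels: by pigeonhole the left flap contains at least three triangles and yields $f_6$ and $f_8$, while the right flap yields $f_7$. A proof that keeps the frame's six-face map fixed throughout, as you propose, cannot close this case; note also that \Pu is a proper subset of \dhex, so ``the \dhex-portion of \Pu covering six faces'' is not even available in that configuration.

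Relatedly, you omit the paper's opening reduction: one may assume $P$ does not \Tcontain \dhex, for otherwise $P$ \Tcontains \Px (since \dhex does) and \cref{lem:containingPx} already applies. This is not cosmetic bookkeeping. It guarantees that at most nine triangles of $P$ lie inside the \dhex-frame, hence at least six lie strictly outside it, and this count of six is what feeds the pigeonhole in the left-and-right-flap case (both flaps non-empty, one with at least three triangles, so that all of $f_6$, $f_7$, $f_8$ get produced). Your weaker count -- size $\geq 15$ minus at most ten frame triangles leaves five outside -- is not obviously enough for that step, and without the reduction you would additionally have to handle configurations in which part of the surplus merely completes the \dhex inside the frame rather than populating any flap.
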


\begin{proof}
	We may assume that $P$ does not \Tcontain $\dhex$ (which \Tcontains $\Px$); otherwise \cref{lem:containingPx} implies the \new{statement}. Consequently, $P$ has six triangles outside the  $\dhex$-frame containing $P$, see \cref{fig:something}.
	We distinguish two cases:  a) there exists a triangle in some flap with a neighboring triangle outside the flap or b) all triangles of $P$ lie within the flaps.

	\begin{figure}[htb]
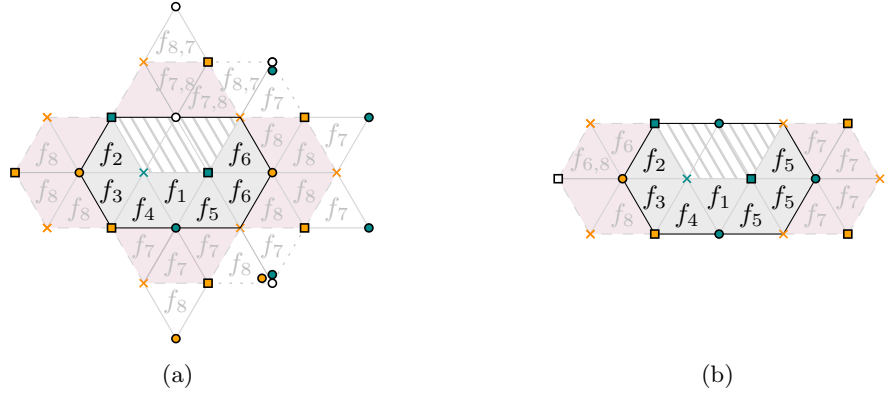

		\centering		
		\begin{subfigure}[t]{.45\textwidth}
			\centering
			\includegraphics[page=22]{C10_scaled}
			\caption{}
			\label{fig:something}
		\end{subfigure}\hfil
		\begin{subfigure}[t]{.45\textwidth}
			\centering
			\includegraphics[page=23]{C10_scaled}
			\caption{}
			%				\caption{Triangles only in left and right flap.}
			\label{fig:something2}
		\end{subfigure}
		\caption{Illustration of the proof of \cref{lem:containingPu}.}
		\label{fig:}
	\end{figure}

		In case a),  the map in \cref{fig:something} can be reflected (horizontally) such that there exist triangles with labels $f_7$ and $f_8$. The label $f_{7,8}$ indicates that it can be adjusted as wished. Moreover, $P$ folds into~\Oct by some strategy presented in \cref{fig:something} (after reducing to a crucial convex subpolyiamond).
	
	In case b), unless all triangles lie within the left and right flap, the map in \cref{fig:something} can be reflected  such that there exist faces with labels $f_7$ and $f_8$. Moreover, $P$ folds into~\Oct by the strategy presented in \cref{fig:something}.
	
	If all triangles lie within the left and right flap, we consider the strategy indicated in \cref{fig:something2}. By symmetry, we may assume that the left flap contains at least three triangles. Hence, there exist faces with label $f_6$ and $f_8$; moreover, a triangle with label $f_7$ exists in the right flap. As the polyiamond in \cref{fig:something2} folds into~\Oct, $P$ does as well.
	This completes the proof.
\end{proof}

\begin{claim}\label{lem:containingPz}
	Every polyiamond $P$ that \Tcontains \Pz and has size $\geq 15$ folds into~\Oct.
\end{claim}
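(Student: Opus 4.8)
The plan is to mirror the structure of the two preceding claims (\cref{lem:containingPx,lem:containingPu}): embed the polyiamond \Pz into a \dhex-frame and argue that, because $P$ has size at least $15$ while \Pz and its flaps are small, there must be extra triangles whose presence forces a folding covering all eight faces. First I would fix a convenient triangle-face-map of the \dhex core of \Pz analogous to the one in \cref{fig:unfC10plus5B}, noting which two faces (say $f_7$ and $f_8$) of \Oct remain uncovered by the core alone. The whole point of the flaps is that triangles lying in them, after suitable zig-zag reduction of everything outside a crucial convex subpolyiamond, can be routed onto the two missing faces; so I would first dispose of the easy case in which $P$ \Tcontains a \dhex (whence \cref{lem:containingPx} applies directly) and henceforth assume $P$ has exactly the six triangles of \Pz outside the frame plus additional triangles attached elsewhere.

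Next I would do the case analysis on where the surplus triangles sit relative to the flaps of \Pz, exactly as in the proof of \cref{lem:containingPu}. The main division is: (a) some flap triangle has a neighbor outside the flap, or (b) all triangles of $P$ lie within the flaps. In case (a) I would exhibit (by reflecting the base map if necessary) a map in which the escaping triangle and its flap together realize labels $f_7$ and $f_8$, and invoke that the displayed auxiliary polyiamond folds into \Oct to conclude $P$ does too. In case (b) I would use the size bound $\lvert P\rvert\ge 15$ together with the bounded size of the flaps to force at least one flap to be ``full enough'' to supply the missing faces; by the symmetry of \Pz I may assume without loss of generality which flap this is, so that labels $f_7$ and $f_8$ (and any further face the core misses) are covered. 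Each subcase reduces, via \cref{lem:containingConvex}, to checking the foldability of one small explicitly drawn polyiamond, which is immediate from its indicated triangle-face-map.

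The step I expect to be the main obstacle is the bookkeeping of exactly which faces the \Pz-core covers and hence how many surplus triangles, and in which flaps, are genuinely needed to cover the remainder. Unlike \Px, the shape \Pz is not symmetric under the full dihedral group, so I cannot collapse as many subcases by symmetry; I would need to be careful that the reflection/rotation used to place the free triangle onto the correct missing face is actually a symmetry of \Pz and does not disturb the already-placed core labels. I would therefore pin down the automorphism group of \Pz first and organize the surplus-triangle cases modulo that group, so that the size-$15$ hypothesis is used only once, to guarantee the existence of the decisive extra triangle, after which the remaining verification is the routine ``this drawn polyiamond folds'' check licensed by \cref{lem:containingConvex,lem:nets}.
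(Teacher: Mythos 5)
You follow the paper's template --- dispose of easy cases via the earlier claims, then split on where the surplus triangles sit relative to the \dhex-frame and its flaps, finishing each case with an explicitly drawn foldable polyiamond --- but two of your steps have genuine gaps. The first is your reduction: you only exclude the case that $P$ \Tcontains \dhex. Since \Px and \Pu are proper subpolyiamonds of \dhex, this does not exclude them, and it caps the number of frame triangles of $P$ at $9$, guaranteeing only $15-9=6$ surplus triangles outside the frame (your ``six'', which you moreover misplace: the triangles of \Pz lie \emph{inside} the frame). The paper instead assumes that $P$ \Tcontains neither \Px nor \Pu, invoking \cref{lem:containingPx,lem:containingPu}; because enlarging \Pz inside the frame to nine triangles creates a copy of \Px or \Pu, this caps the frame triangles at $8$ and yields \emph{seven} triangles outside the frame. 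That extra triangle is not cosmetic; it is exactly what the final case consumes.

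The second gap is your case (b). If all surplus triangles lie in flaps whose labels under the base map are faces the core already covers, no amount of ``fullness'' and no reflection or rotation produces the missing labels $f_7,f_8$ --- as you yourself note, \Pz admits only a $180^\circ$ rotation. The paper isolates precisely this situation (all triangles in neighboring flaps with the same labels), uses the rotation to normalize to the top and right flap, and then exhibits a genuinely \emph{different} triangle-face-map for that one configuration (\cref{fig:ProofPzB}). Moreover, to use that map one must know that the triangle set of $P$ \emph{equals} that of the drawn polyiamond: foldability transfers neither to super- nor to sub-polyiamonds under $\triangle$-containment (\cref{obs:containment}; a missing triangle may be essential to cover a face). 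So both inclusions are needed: all surplus triangles lie in the two flaps (your case hypothesis), and all seven flap triangles are actually present --- which is where the count of seven, hence the stronger reduction, is indispensable. With only six guaranteed triangles the drawn polyiamond may properly \Tcontain $P$, and its foldability then says nothing about $P$.
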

\begin{proof}
	We may assume that $P$ does not \Tcontain \Px nor \Pu; otherwise \cref{lem:containingPx,lem:containingPu} imply the \new{statement}. Consequently, $P$ has seven triangles outside the \dhex-frame depicted in \cref{fig:ProofPzA}.

\begin{figure}[htb]
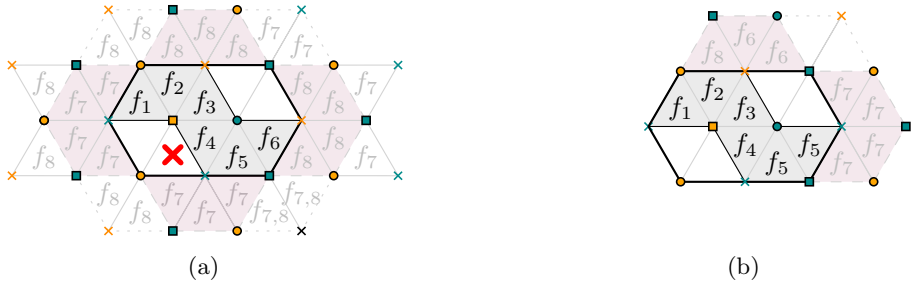

	\centering		
	\begin{subfigure}[t]{.45\textwidth}
		\centering
		\includegraphics[page=29]{C10_scaled}
		\caption{}
		\label{fig:ProofPzA}
	\end{subfigure}\hfil
	\begin{subfigure}[t]{.45\textwidth}
		\centering
		\includegraphics[page=30]{C10_scaled}
		\caption{}
		\label{fig:ProofPzB}
	\end{subfigure}
	\caption{Illustration for the proof of \cref{lem:containingPz}.}
	\label{fig:ProofPz}
\end{figure}
	Similar as above, we consider the case that there exists a flap with a neighboring triangle outside the flap. Then,  the map in \cref{fig:ProofPzA} induces triangles with labels $f_7$ and~$f_8$. Moreover, the depicted polyiamonds folds into \Oct and \Tcontains~$P$. 
The same argument can be applied for the case that there exist triangles in flaps with different labels.

It remains to consider the case that all triangles are contained in neighboring flaps with the same labels. By the rotational symmetry, we may assume that all triangles are contained in the top and right flap as illustrated in \cref{fig:ProofPzB}; moreover, we know that all of these triangles are present because at least seven triangles exist outside the \dhex-frame. The illustrated polyiamond folds into \Oct and \Tcontains~$P$. Thus, $P$ folds into \Oct.
\end{proof}

\begin{claim}\label{lem:containingPl}
	Every polyiamond $P$ that \Tcontains \Pl and has size $\geq 15$ folds into~\Oct.
\end{claim}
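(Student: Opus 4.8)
The plan is to follow the same structural template that \cref{lem:containingPx}, \cref{lem:containingPu}, and \cref{lem:containingPz} already establish, since \Pl is the fourth and final member of the family and the proof architecture should be essentially parallel. First I would embed \Pl into the unique \dhex-frame that \Tcontains it, so that the \dhex part already accounts for up to six covered faces by \cref{lem:hex}(\cref{prop:dhex}), and identify the flaps (the triangle groups outside the \dhex). As in the previous three claims, I would invoke the earlier claims to reduce the case analysis: I may assume that $P$ does not \Tcontain \Px, \Pu, or \Pz, since otherwise \cref{lem:containingPx}, \cref{lem:containingPu}, or \cref{lem:containingPz} already yields foldability. This assumption forces all the ``extra'' triangles guaranteed by the size bound $\geq 15$ to sit in a restricted configuration relative to the \dhex-frame, exactly the leverage the other claims exploit.

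Next I would do the counting argument that drives every case: \Pl has a fixed size, the \dhex-frame has a fixed number of triangles, and the hypothesis $|P|\geq 15$ forces a definite number of triangles to lie outside the \dhex-frame. The key step is then a short case distinction on where these surplus triangles can appear, mirroring the earlier proofs: (a) there is a flap-triangle with a neighbor outside its flap, or there are triangles in two flaps carrying distinct labels, in which case a triangle-face-map (possibly after reflecting the displayed strategy) realizes both missing faces $f_7$ and $f_8$, and reducing $P$ to the crucial convex subpolyiamond that folds into \Oct finishes the case via \cref{lem:containingConvex}; and (b) all surplus triangles are confined to neighboring flaps sharing labels, where I would use the rotational symmetry of the frame to fix the configuration (say the top and right flaps) and then exhibit an explicit polyiamond that \Tcontains $P$ and is shown to fold into \Oct by a labeled triangle-face-map. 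In every branch the conclusion follows because the exhibited foldable polyiamond \Tcontains $P$, so $P$ inherits foldability.

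The main obstacle I anticipate is the final tight case, where the surplus triangles cluster into two adjacent same-label flaps: here the flap sizes are just barely enough to guarantee the labels needed to cover all eight faces, so the argument depends on the precise shape of \Pl and on the size threshold $15$ being used sharply. I would need to verify that the assumed exclusions (no \Px, \Pu, \Pz) genuinely constrain the flap contents enough that the remaining triangles are forced to supply a face labeled $f_7$, a face labeled $f_8$, and the intermediate labels ($f_6$ in particular), so that the displayed polyiamond truly folds onto all of \Oct. As with the companion claims, I would discharge this by pointing to an explicit folding figure rather than by a symbolic computation, relying on \cref{lem:colorclasses} and \cref{lem:hex} to rule out the spurious maps and on \cref{lem:containingConvex} to transfer foldability from the small witnessed polyiamond back to $P$.
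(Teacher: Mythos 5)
Your plan presupposes that this claim is proved by the same \dhex-frame/flap template as \cref{lem:containingPx,lem:containingPu,lem:containingPz}, but the paper deliberately abandons that template here, and the places where your plan is vague are exactly the places where the template breaks down. First, you begin by embedding \Pl into ``the unique \dhex-frame that \Tcontains it''; the paper's proof never places \Pl inside a \dhex-frame at all. Instead it works with a larger dashed frame chosen so that \emph{every} triangle outside it can be assigned the single missing label $f_8$, which justifies assuming $P$ lies inside that frame; the only step your plan shares with the paper is the exclusion of \Pu and \Pz via the earlier claims. Second, in your final case you appeal to ``the rotational symmetry of the frame'' to normalize where the surplus triangles sit. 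That reduction is what \cref{lem:containingPz} uses, and it works there because \Pz is point-symmetric; the L-shaped \Pl has no such symmetry, so this normalization is unavailable and the case count does not collapse.

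Third, and this is the real gap: your acknowledged ``main obstacle'' case cannot be discharged by exhibiting one explicit foldable polyiamond that \Tcontains $P$, because in that case $P$ is not pinned down to a single configuration. The paper resolves it with a case distinction on a specific triangle $t$ of the frame together with a pigeonhole argument on labels. If $t\in P$, the relevant frame has only $14$ triangles, so size $\geq 15$ forces a triangle outside it, which supplies $f_8$. If $t\notin P$, then the frame region carries four triangles labeled $f_5$, one labeled $f_6$, and two ``joker'' triangles usable as $f_5$ or $f_6$, while the exclusion of \Pu and \Pz guarantees that at most three of the unlabeled triangles of the frame can belong to $P$; hence among the at least eight triangles of $P$ beyond \Pl there must be two with different labels, which cover $f_5$ and $f_6$ after adjusting the jokers. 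Nothing in your proposal plays the role of this counting step, so the plan stalls precisely where you predicted it would.
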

\begin{proof}
	Observe that every triangle outside the dashed frame in \cref{fig:ProofPlA} yields a triangle with the missing label $f_8$. Hence, we may assume that $P$ is contained in the frame.
	Moreover, we may assume that $P$ does not contain \Pu nor \Pz; otherwise \cref{lem:containingPu,lem:containingPz} imply the \new{statement}. Consequently, at least 3 triangles are missing within the frame as indicated, where crosses on an edge indicate that at most one of the incident triangles exist.

	\begin{figure}[htb]
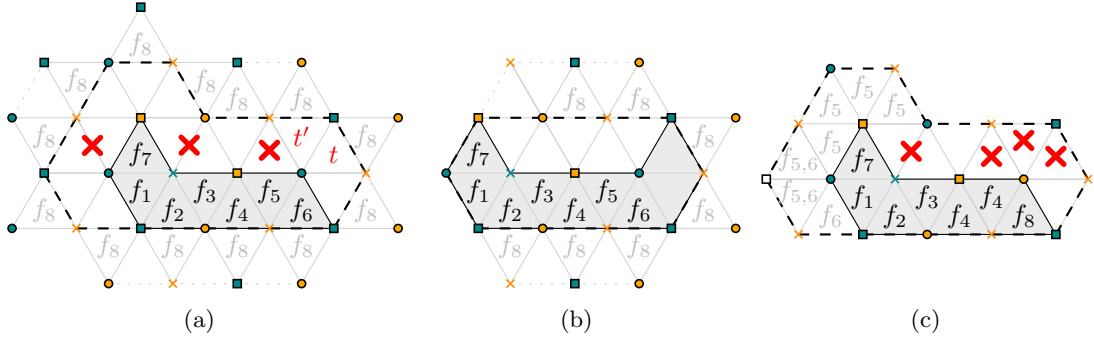

		\centering	
		\begin{subfigure}[t]{.35\textwidth}
			\centering
			\includegraphics[page=31]{C10_scaled}
			\caption{}
			\label{fig:ProofPlA}
		\end{subfigure}\hfil
		\begin{subfigure}[t]{.3\textwidth}
			\centering
			\includegraphics[page=32]{C10_scaled}
			\caption{}
			\label{fig:ProofPlB}
		\end{subfigure}\hfil
		\begin{subfigure}[t]{.3\textwidth}
			\centering
			\includegraphics[page=33]{C10_scaled}
			\caption{}
			\label{fig:ProofPlC}
		\end{subfigure}
		\caption{Illustration for the proof of \cref{lem:containingPl}.}
		\label{fig:ProofPl}
	\end{figure}

	We distinguish two cases:
If $P$ contains the triangle $t$, then it contains the polyiamond depicted in \cref{fig:ProofPlB}. Because the frame contains only 14 triangles, there exists a triangle $f$ outside the frame. Together with the depicted map (or its mirror image), $f$ ensures a triangle with label $f_8$. 
	
	If $P$ does not contain the triangle $t$, then its left neighboring triangle $t'$ does not belong to $P$ because $P$ is contained in the dashed frame, see \cref{fig:ProofPlC}.
	The frame-polyiamond depicted in \cref{fig:ProofPlC} has four triangles with label $f_5$, one with $f_6$ and two with joker label $f_{5,6}$ (indicating that these can be adjusted as wished).
	Because $P$ does not \Tcontain \Pz nor \Pu, at most three of the remaining triangles without labels exist. It follows that  any choice of eight additional triangles contains two triangles with different labels. By adjusting the joker label as needed, these labels can represent $f_5$ and $f_6$. Moreover, the depicted polyiamonds fold into~\Oct.
\end{proof}
%

% We note that  there exist  73983 (slit-free) polyiamonds and 67716 hole-free polyiamonds of size 15, see A000577~\cite{oeis2} and A070765~\cite{oeis1} of EOIS.

% A000577
% 1, 1, 1, 3, 4, 12, 24, 66, 160, 448, 1186, 3334, 9235, 26166, 73983, 211297, 604107, 1736328, 5000593, 14448984, 41835738, 121419260, 353045291, 1028452717, 3000800627, 8769216722, 25661961898, 75195166667, 220605519559, 647943626796
%
%A070765
%1, 1, 1, 3, 4, 12, 24, 66, 159, 444, 1161, 3226, 8785, 24453, 67716, 189309, 528922, 1484738, 4172185, 11756354, 33174451, 93795220, 265565628, 753060469, 2138206966, 6078931114, 17302380313, 49302121747, 140627400927, 401510058179

\section{Folding with Prescribed Face Coverage}\label{sec:5}
For a polyiamond folded into the octahedron, we call the number of triangles covering each face its  \emph{\cover}. Clearly, the \cover is a positive integer; otherwise the polyiamond does not fold into \Oct.
In this section, we aim to find polyiamonds that fold into the octahedron such that each face~$f_i$ of the octahedron is covered by an 
assigned positive integer $m_i$ of triangles. 
We show that for each choice of assigned positive integers, there exists a foldable polyiamond such that the \cover of each face is equal to the assigned number.

\begin{theorem}\label{thm:multplicity}
	Let $(m_i)_{i=[8]}$ be a sequence of positive integers. Then there exists a polyiamond $P$ that folds into~\Oct such that the \cover of face $f_i$ is $m_i$. 
	
	In particular, there exists such a polyiamond $P$ that does neither contain holes nor slit edges.
\end{theorem}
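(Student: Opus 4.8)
The plan is to build $P$ explicitly as a net of \Oct together with one \emph{flap} per face, where each flap is a path of triangles that folds flat onto its face and thereby supplies the prescribed number of extra layers. Since a polyiamond that folds into \Oct and covers every face exactly once is precisely a net of \Oct, I would first fix such a net $N$ whose triangle-face-map $t_1,\dots,t_8\mapsto f_1,\dots,f_8$ is a bijection, so that after folding $N$ every face has \cover exactly $1$. I would choose $N$ to be a \emph{strip} net, i.e.\ one whose triangles form a path in the triangle-adjacency graph; such a net exists, as unfolding \Oct along a Hamiltonian path in its face-adjacency graph yields a strip of eight triangles. The reason for insisting on a strip net is that every triangle $t_i$ then has a free edge whose opposite grid cell is empty, leaving room to attach a flap at $t_i$.

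The key mechanism is flat-stacking by $180^\circ$ folds. I would record the simple observation that if $S$ is any self-avoiding path of triangles attached to a triangle $t$ and \emph{all} of its edges are folded by $180^\circ$ (an accordion fold, as in \cref{fig:convexReduction}), then every triangle of $S$ is stacked onto the cell of $t$: inductively, once the first $k$ triangles have been reflected onto $t$'s cell, the next shared edge lies on the boundary of that cell, and the following $180^\circ$ fold reflects the next triangle onto it as well. Hence, if $t$ is mapped to face $f$, a flap of $k$ triangles attached at $t$ and folded flat raises the \cover of $f$ by exactly $k$ and changes no other \cover. Attaching to each $t_i$ a flap of $m_i-1$ triangles therefore produces \cover $m_i$ on every face $f_i$, and all folds used are $\pm\beta$ (inside $N$) or $\pm180^\circ$ (inside the flaps), so the final configuration is a valid folded state of $P$ on \Oct.

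It remains to embed the eight flaps in the plane. Here I would route each flap as a self-avoiding path leaving $t_i$ through its free edge and extending into the unbounded empty region surrounding $N$, keeping distinct flaps (and $N$) pairwise edge-disjoint, i.e.\ no triangle of one shares an edge with a triangle of another. Then the triangle-adjacency graph of $P$ is the net-path with eight disjoint paths hanging off it, which is a tree. A tree-dual polyiamond encloses no bounded complementary region, so $P$ has no holes; and since we keep every shared edge, $P$ has no slit edges. This establishes the theorem together with its refinement in a single construction.

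The main obstacle is exactly this last embedding step, and it is where care is needed when the $m_i$ are large. The flaps may be arbitrarily long, so I must guarantee that long paths can be routed without any two of them ever sharing an edge — which would otherwise force either a slit edge or, if the edge were included, a cycle and hence a possible hole — and without self-overlap. Because a strip net is one triangle thick and the complement of $N$ is an unbounded region of the triangular grid, this can always be arranged by sending the flaps outward along separated corridors (for instance in pairwise disjoint directions, or on nested tracks), so that length alone never forces a collision.
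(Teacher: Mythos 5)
Your core mechanism---a net of \Oct plus one accordion-folded flap of $m_i-1$ triangles per face---is exactly how the paper's construction starts (the paper calls the flaps \emph{arms}), and your flat-stacking observation is correct. The proof fails, however, precisely at the step you flag as the main obstacle: the claim that the eight flaps can always be routed so that no two triangles from different flaps (or from a flap and the net) share an edge. This claim is false, and it fails immediately next to the net, not far away, so ``separated corridors'' cannot repair it. First, note that the straight strip of eight triangles is \emph{not} a net of \Oct (it wraps around a six-face band of \Oct, covering two band faces twice and missing the two remaining faces); up to congruence there are only three nets whose dual graph is a path, one for each class of Hamiltonian paths of the cube graph. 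Take the most natural one: a horizontal strip of six triangles $t_1,\dots,t_6$ with one cap triangle $A$ attached below $t_1$ and one cap $B$ attached above $t_6$. The free edges of $t_2$ and $t_4$ both open into the row directly above the strip, which also contains $B$: the forced first flap cells $F_2$ (above $t_2$) and $F_4$ (above $t_4$) and the net triangle $B$ are three pairwise non-adjacent upward cells of that row, and the two downward cells separating them are each adjacent to two of $F_2,F_4,B$. Since $t_4$ has only one free edge, any flap supplying $f_4$ must start at $F_4$, and its second triangle is forced into one of those two downward cells, hence adjacent either to $F_2$ or to $B$. So already for $m_2\ge 2$ and $m_4\ge 3$ (e.g.\ all $m_i=3$), pairwise edge-disjointness is impossible. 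The other two path-dual nets are worse: there, two interior net triangles have forced first flap cells that are themselves adjacent. Nor can you simply glue the offending edge and keep your folded state: $F_2$ lies on $f_2$ and the flap of $t_4$ lies on $f_4$, and $f_2,f_4$ are \emph{non-adjacent} faces of \Oct (they share only a corner), so the two placements would assign the shared edge to two different edges of \Oct.

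This interference between arms is exactly the crux of the theorem, and it is what the paper's proof spends nearly all its effort on: rather than separating the arms (which, as above, is impossible), it merges them---the material for $f_4$ is attached onto the arm of $f_3$ (\cref{fig:MultiplicityA,fig:MultiplicityB})---and then verifies, by a case distinction on $m_3$ together with a symmetry argument when some faces have \cover $1$, that the merged arm still folds onto both intended faces. Such ``merged'' gluings are consistent with the folding for some configurations and inconsistent for others, which is precisely why the case analysis cannot be avoided. Your proposal replaces this step by an unsubstantiated routing claim that is false for every path-dual net, so it has a genuine gap at the heart of the argument.
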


\begin{proof}
	For simplicity, we start by presenting a polyiamond with slit edges that folds into~\Oct as required. Afterwards, we modify the strategy to obtain polyiamonds without slit edges.
	
	\cref{fig:MultiplicitySlits} displays a polyiamond that can be viewed as an octahedral net with \emph{arms} of appropriate length. Apart from the net, each face$f_i$ is assigned an arm~$A_i$ consisting of $m_i-1$ triangles. Each arm can be folded on the incident triangle of the net by folding the grid edges using alternating mountain and valley folds. 
	
	\begin{figure}[htb]
		\centering
		\begin{subfigure}[t]{.45\textwidth}
			\centering
			\includegraphics[page=15]{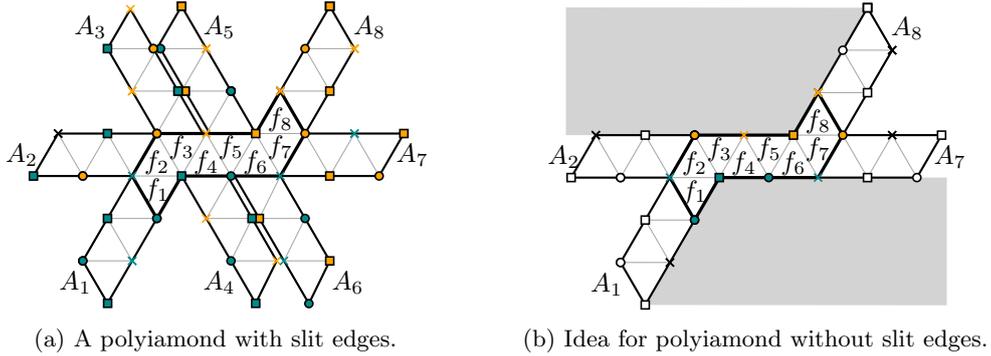}
			\caption{A polyiamond with slit edges.}
			\label{fig:MultiplicitySlitsA}
			\end{subfigure}\hfil
		\begin{subfigure}[t]{.45\textwidth}
			\centering
			\includegraphics[page=16]{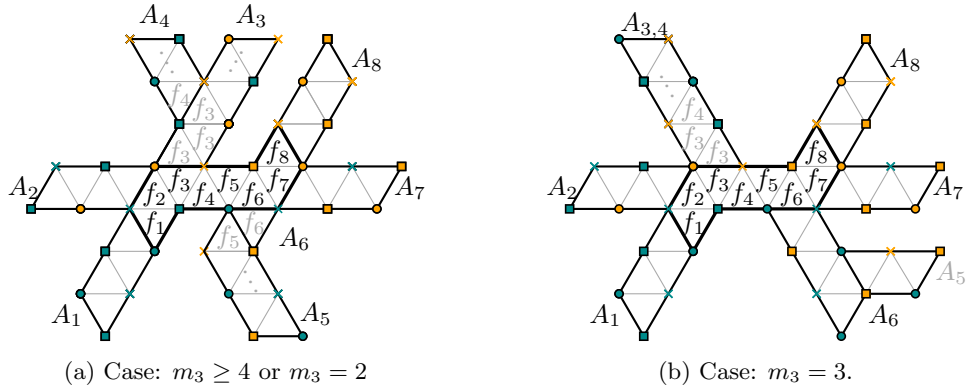}
			\caption{Idea for polyiamond without slit edges.}
			\label{fig:MultiplicitySlitsB}
		\end{subfigure}\hfil
	\caption{Illustration for the proof of \cref{thm:multplicity}.}
	\label{fig:MultiplicitySlits}
	\end{figure}
	
	In order to obtain a polyiamond without slit edges, a little more work is required. 
	The arms $A_1,A_2,A_7,A_8$ remain unchanged as depicted in \cref{fig:MultiplicitySlitsB}. The idea is to attach the arm $A_4$ to the arm $A_3$ such that both arms stay within the top gray region. Symmetrically, the arm $A_5$ is attached to $A_6$.  We distinguish a few simple cases:

	If $m_3\geq 4$, then we attach the arm $A_4$ on $A_3$ as depicted in \cref{fig:MultiplicityA}. 
	The coloring of the vertices shows that the triangle of each arm can be folded onto the correct face of~\Oct.
		\begin{figure}[htb]
		\centering
		\begin{subfigure}[t]{.45\textwidth}
			\centering
			\includegraphics[page=18]{octahedron}
			\caption{Case: $m_3\geq 4$ or $m_3=2$  }
			\label{fig:MultiplicityA}
		\end{subfigure}\hfil
		\begin{subfigure}[t]{.45\textwidth}
			\centering
			\includegraphics[page=17]{octahedron}
			\caption{Case: $m_3=3$.}
			\label{fig:MultiplicityB}
		\end{subfigure}\hfil
		\caption{Illustration for the proof of \cref{thm:multplicity}.}
		\label{fig:Multiplicity}
	\end{figure}

	Similarly, if $m_3=2$, we attach the arm $A_4$ on the right side of the arm $A_3$ consisting of one triangle as in \cref{fig:MultiplicityA} (illustrated for $A_5$ and $A_6$).
	
	If $m_3=3$, then we consider the polyiamond in \cref{fig:MultiplicityB} with arm $A_{3,4}$ of length $(m_3-1)+(m_4-1)$. Note that we may fold the first two triangles  onto $f_3$ and the remaining ones onto $f_4$.
	
	By symmetry, we handle the cases of $m_6=2,3,\geq4$ analogously.
	Hence, it remains to consider the case that  $m_3=1$ or $m_6=1$ (or both). Note that $f_3$ and $f_6$ do not share vertices and hence, they are opposite faces. Using the symmetry of the octahedron, we may assume that $f_3,f_6$ is a pair of opposite faces maximizing $\min\{m_3,m_6\}$. It thus remains to consider the case that all four pairs of opposite faces contain a face with \cover 1. Either, two of these faces share an edge or there exist exactly four faces with $m_i=1$, none of which share an edge. 
	If two of the faces with \cover 1 share a side, we map them to $f_4$ and $f_5$. Note that removing the arms $A_4,A_5$ in \cref{fig:MultiplicitySlitsA} yields a polyiamond without slit edges.
	Otherwise, we may assume that $m_1=m_3=m_5=m_7=1$ and we consider the polyiamond in \cref{fig:MultiplicityC}. It consists of a net and four arms that can be folded onto its respective faces as before.
	\begin{figure}[htb]
		\centering
		\includegraphics[page=19]{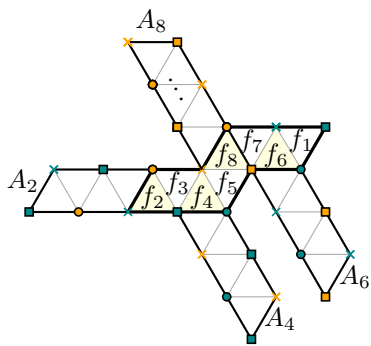}
		\caption{Illustration for the proof of \cref{thm:multplicity} for the case that $m_1=m_3=m_5=m_7=1$.}
		\label{fig:MultiplicityC}
		\end{figure}
\end{proof}

\section{Future work}\label{sec:6}

In this paper, we studied foldability of polyiamonds into the octahedron. For future work, it is interesting if similar results can be obtained for foldings into other platonic solids such as the dodecahedron and the icosahedron. We note that the number of nets increases rapidly: While the cube and the octahedron have 11 nets, the dodecahedron and icoshahedron have 43\,380 nets \cite{PlatonicSolidsEdgeUnfoldings}.

\paragraph{Acknowledgments}
We thank Christian Rieck for valuable suggestions on a draft of this manuscript and the anonymous reviewers for constructive feedback.
Additionally, the first author thanks Tilman Stehr for his good advice concerning questions of implementation. 

%---------------------------- Bibliography -------------------------------

\small
\newcommand*{\doi}[1]{\href{http://dx.doi.org/#1}{doi: #1}}
\bibliographystyle{abbrvurl}
\bibliography{bibliography}

\end{document}